\algnewcommand\algorithmicforeach{\textbf{for each}}
\newcommand{\multiline}[1]{%
  \begin{tabularx}{\dimexpr\linewidth-\ALG@thistlm}[t]{@{}X@{}}
    #1
  \end{tabularx}
}
\newtheorem{lemma}{Lemma}
\newtheorem{definition}{Definition}
\newtheorem{theorem}{Theorem}
\newtheorem{corollary}{Corollary}
\theoremstyle{remark}
\numberwithin{lemma}{section}
\numberwithin{definition}{section}
\numberwithin{theorem}{section}
\numberwithin{corollary}{section}
\numberwithin{remark}{section}
\numberwithin{algorithm}{section}
\newcommand{\oldcode}{\text{in}}
\newcommand{\newcode}{\text{wire}}
\newcommand{\gauge}{\mathcal{G}_{\text{wire}}}
\newcommand{\normalizer}[1]{\mathcal{N}(#1)}
\newcommand{\stabs}{\mathcal{S}_\oldcode}
\newcommand{\idty}{\mathbb{1}}
\newcommand{\oldcodespace}{\mathcal{C}_\oldcode}
\newcommand{\newcodespace}{\mathcal{C}_\newcode}
\newcommand{\logicalsnew}{\mathcal{L}_\newcode}
\newcommand{\logicalsold}{\mathcal{L}_\oldcode}
\newcommand{\barelogicals}{\mathcal{L}_\text{bare}}
\newcommand{\nnew}{n_\newcode}
\newcommand{\nold}{n_\oldcode}
\newcommand{\data}{\textsf{data}}
\newcommand{\anc}{\textsf{anc}}
\newcommand{\copyr}{\textsf{copy}}
\definecolor{FGreen}{RGB}{1,68,33}
\def\l@subsection#1#2{}
\def\l@subsubsection#1#2{}
\begin{document}

\title{Wire codes}
\author{Nou\'edyn Baspin}
\affiliation{School of Physics, University of Sydney, Sydney, NSW 2006, Australia}
\orcid{0000-0002-4028-3098}
\author{Dominic J. Williamson}
\affiliation{School of Physics, University of Sydney, Sydney, NSW 2006, Australia}
\orcid{0000-0002-8029-6408}
\date{April 2026}

\begin{abstract}
\noindent
Quantum information is fragile and must be protected by a quantum error-correcting code for large-scale practical applications. 
Recently, highly efficient quantum codes have been discovered which require a high degree of spatial connectivity. 
This raises the question of how to realize these codes with minimal overhead under physical hardware connectivity constraints.  
Here, we introduce a general recipe to transform any quantum stabilizer code into a subsystem code that has local interactions, with weight and degree three, on a given graph. 
We call the subsystem codes produced by our recipe \textit{wire codes}, and their code parameters depend on the input code and the given graph. 
Wire codes can be adapted to have a local implementation on any graph that supports a low-density embedding of the input Tanner graph, with an overhead that depends on the embedding. 
In particular, applying our results to a stabilizer code and a subdivision of its own Tanner graph, yields a quantum weight reduction procedure with a multiplicative qubit overhead and distance reduction  that are linear in the input check degree and weight, respectively. 
Applying our results to hypercubic lattices leads to a construction of local subsystem codes with optimal scaling code parameters in any fixed spatial dimension. 
Similarly, applying our results to families of expanding graphs leads to local codes on these graphs with code parameters that depend on the degree of expansion. 
Our results constitute a general method to construct low-overhead subsystem codes on general graphs, which can be applied to adapt highly efficient quantum error correction procedures to hardware with restricted connectivity. 
\end{abstract}

\maketitle

\tableofcontents

\section{Introduction} 
\label{sec:Introduction}

Quantum error correction is a necessary ingredient for large-scale quantum technologies to function fault-tolerantly in the presence of noise~\cite{Shor1995,Steane1996,shor1996fault,gottesman1997stabilizer,aharonov1997fault,knill1998resilientQC,kitaev1997quantum,preskill1998reliable,Preskill1997fault}. 
Implementations of quantum error correction with the surface code have been the subject of a large volume of research due to their relative simplicity, low weight and degree checks, two-dimensional locality, and high thresholds~\cite{kitaev2003fault,bravyi1998quantum,dennis2002topological,raussendorf2007fault,Acharya2024}.
Subsystem codes have been discovered with even simpler implementations, lower weight and degree checks, and other favorable properties~\cite{Poulin2005,Bacon2005a,Bombin2009Interacting,Bombin2009,bravyi2011subsystem,Bravyi2013,ellison2022pauli,Brown2019Leakage}. 
Quantum low-density parity-check (qLDPC) codes~\cite{breuckmann2021ldpc} that achieve optimal asymptotic performance, i.e.~constant rate and relative distance, have been discovered recently~\cite{Panteleev2022}, see also Refs.~\cite{breuckmann2020balanced,leverrier2022quantum,Dinur2023}. 
This comes at the cost of simplicity in the sense that such codes have large constant weight and degree checks and require a much higher degree of connectivity than the surface code. 

The performance of subsystem codes that can be implemented locally on a lattice in Euclidean space with a fixed dimension $D$ are known to satisfy the bounds ${kd^{1/(D-1)}\leq O(n)}$, and ${d\leq O(n^{\frac{D-1}{D}})}$, where $n$ is the number of physical qubits, $k$ is the number of encoded logical qubits, and $d$ is the code distance
~\cite{bravyi2009no,bravyi2010tradeoffs,bravyi2011subsystem,flammia2017limits}. In particular, for codes that saturate the distance scaling bound we have
\begin{align}
\label{eq:SubsystemBPT}
[[n,k,d]]=[[n,O(n^{\frac{D-1}{D}}),O(n^{\frac{D-1}{D}})]]. 
\end{align} 
In this work, we introduce a construction of subsystem codes from an arbitrary input stabilizer code that reduce the weight and degree to three, and which can be adapted to fit the locality of any given connectivity graph. 
When we apply our construction to map good qLDPC codes to local implementations on hypercubic lattices and find a family of codes that saturate the bound in Eq.~\eqref{eq:SubsystemBPT} for any dimension. 

\subsection{Overview of Main Results} 

Here, we introduce a weight reduction procedure that maps any stabilizer code to a 3-local subsystem code with related code parameters.  
We leverage the flexibility of our weight reduction procedure to find local implementations of the weight reduced subsystem codes, given an embedding of the input Tanner graph into an arbitrary target graph. 
We call the resulting family of codes \textit{wire codes}. 
The code parameters of the input code are mapped to those of the output code as follows 
\begin{align}
    [[n,k,d]] \mapsto [[O( \ell_{\text{max}} \delta n),k,\Omega(\frac{1}{w}d)]],
\end{align}
where $\omega$ and $\delta$ are the maximum weight and degree of the input code, respectively, and $\ell_{\text{max}}$ is the length of the longest edge in the wire code. 
The weight of a check is the number of qubits in its support, while the degree of a qubit is the number of checks it interacts with. 
In the wire codes, there is an additional length parameter assigned to each edge in the input Tanner graph, which corresponds to how far the edge can reach in an embedding of that Tanner graph. 

Each wire code has a particular structure that is based on a trivalent resolution of the input Tanner graph, together with an assignment of variable lengths to the edges. 
The associated wire code has local checks along the edges, or wires, of the resolved Tanner graph. 
When all edge lengths are fixed to $1$, the wire code construction defines a weight reduction procedure that is described in the following theorem. 
\begin{theorem}[Wire code weight reduction]
    Applying the procedure described in Section~\ref{sec:WeightReduction} to an $[[n,k,d]]$ stabilizer code, with max check weight $\omega$ and qubit degree $\delta$, produces an $[[O( \delta n),k,\Omega(\frac{1}{\omega}d)]]$ subsystem code, which we call a wire code, with max check weight and degree three. 
\end{theorem}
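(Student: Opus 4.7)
The plan is to split the argument into three pieces, one for each of the parameters. First, I would describe the wire-code data starting from the Tanner graph $G$ of the input $[[n,k,d]]$ code. Each qubit vertex of degree $\delta_i$ is replaced by a binary tree with $\delta_i$ leaves, and each check vertex of weight $w_j$ by a binary tree with $w_j$ leaves; the leaves of adjacent trees are identified along the edges of $G$. A new physical qubit lives at every node of the resolved graph. Along each internal edge of a qubit-tree one places weight-$2$ gauge generators ($XX$ and $ZZ$), while each internal node of a check-tree carries a weight-$3$ local check of the appropriate type, chosen so that the tree's checks multiply to the original weight-$w_j$ check on its leaves. By construction every check has weight at most three, and no qubit is touched by more than three local generators.

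Second, I would count. Each tree contributes a number of vertices linear in its leaf count, so the total qubit count is $\sum_i O(\delta_i) + \sum_j O(w_j) = O(|E(G)|) = O(\delta n)$, using $|E(G)| \leq \delta n$. To confirm $k$ is unchanged, I would observe that the qubit-tree gauges identify all copies of a data qubit modulo gauge, while the check-tree generators multiply to the original stabilizers; a dimension count via the standard subsystem-code formula then returns the input $k$, giving a natural bijection between bare logicals of the two codes.

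The main obstacle is the distance lower bound $\Omega(d/\omega)$. My plan is a two-stage \emph{pulling} argument. Given a minimum-weight nontrivial dressed logical $\bar L$ of the wire code, I would first use qubit-tree gauges to concentrate $\bar L$'s support onto one designated representative qubit per data-qubit tree, then use check-tree generators to eliminate any residual support on check-tree interior qubits. Each such elimination is performed by multiplying by a local check-tree generator, which inflates the support on original qubits by at most $\omega$ per check-tree ancilla that is pulled off (since a check tree has $\omega$ leaves, all attached to data-qubit trees). The resulting operator, read as a Pauli on the $n$ original qubits, must be a nontrivial logical of the input code --- otherwise $\bar L$ would have been trivial in the wire code under the bijection of the previous step --- and hence has weight at least $d$. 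Tracking the weight blow-up yields $d \leq \omega\,|\bar L|$, i.e., $|\bar L| \geq d/\omega$. The subtlety I expect to spend the most time on is verifying that the pulled-back operator really lands in the nontrivial coset of the input code: this requires a careful matching of the stabilizer, gauge, and bare-logical subgroups of the wire code to those of the input code through the trivalent resolution, and confirming that the two pulling stages preserve this matching.
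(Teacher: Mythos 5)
Your overall strategy---trivalently resolve the Tanner graph, place new qubits on the resolution, arrange local generators so that each check-tree multiplies back to the original check, and prove the distance by cleaning a minimal dressed logical onto the data qubits---is the same skeleton as the paper's. But your construction omits the one ingredient the paper's proof actually hinges on: a single-qubit $X$ gauge check on every ancillary (check-tree interior) qubit. As you have specified them, the check-tree generators are $Z$-coupled chains whose product reproduces the original check; a single $Z$ on an interior check-tree ancilla then commutes with every generator you have written down (it touches no qubit-tree edge, and it commutes with all the $Z$-type chain terms) and is not in the gauge group. That makes it a weight-one bare logical, so the dressed distance of your code is $1$, not $\Omega(d/\omega)$. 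This is precisely the failure mode the paper illustrates with its ``first attempt'' and then repairs by adding the single-site $X$ checks, which simultaneously (i) kill these small logicals, (ii) are the sole source of non-commutativity that makes the output a subsystem rather than a stabilizer code, and (iii) are what allow the cleaning argument to discard the entire $X$-type part of a logical on the ancilla and copy registers for free. Your pulling argument implicitly relies on (iii) without having the checks that justify it.

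Two smaller points. First, your cleaning step asserts that residual support on a check-tree ancilla can always be ``eliminated'' at a cost of $\omega$ on the data qubits; in fact multiplying by chain generators only pushes support around, and the correct dichotomy (the paper's distance lemma) is: either the chosen subset of a check's ancilla generators multiplies to the full $Z$-string on that check's branch---in which case it is equivalent to applying the stabilizer $s$ on the data and costs nothing---or it necessarily leaves weight at least $1$ on the ancillas while saving at most $\omega-1$ elsewhere, which is where the $1/(\omega-1)$ factor comes from. Your accounting lands at the same asymptotic answer, but the justification as written is not valid. Second, placing both an $XX$ and a $ZZ$ generator on every internal edge of a qubit-tree gives a degree-$3$ tree node qubit degree six under the paper's convention (degree counts the number of multi-qubit checks containing a qubit, with only single-qubit checks excluded), so the ``degree three'' claim also requires the paper's asymmetric design ($Z$-couplings plus single-site $X$'s) rather than the symmetric $XX/ZZ$ pair.
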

In the above theorem statement we do not count single qubit gauge checks in the qubit degree. 
This is because such checks do not require additional ancillary qubits and interactions to implement. 

By incorporating general edge lengths, the wire code construction can be combined with a constant density embedding of the resolved Tanner graph into an arbitrary graph $G$ to find a local implementation of the input code on $G$. 
This result is summarized by the following theorem.
\begin{theorem}[Wire code graph embedding]
    Given an $[[n,k,d]]$ stabilizer code $\oldcodespace$ and a $c$-to-one embedding of the trivalent resolution of its Tanner graph into a graph $G$ the wire code described in Section~\ref{sec:GraphEmbedding} provides an $[[O(\ell_{\text{max}}\delta n),k,\Omega(\frac{1}{\omega}d)]]$ subsystem code that we say implements $\oldcodespace$ with local checks on $G$. 
    Here $\ell_{\text{max}}$ is the longest edge length assigned by the graph embedding. 
\end{theorem}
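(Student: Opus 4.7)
The plan is to reduce the graph-embedding theorem to the weight-reduction result (Theorem~1) by a wire-lengthening step. First I would invoke Theorem~1 to obtain the unit-length wire code on the trivalent resolution of the Tanner graph, with parameters $[[O(\delta n),k,\Omega(\frac{1}{\omega}d)]]$ and maximum weight and degree three. Using the given $c$-to-one embedding of the resolved Tanner graph into $G$, I would then route each wire along its image path in $G$ of length at most $\ell_{\text{max}}$, and replace the single wire qubit by a chain of qubits sitting on the vertices of that path, coupled by two-body gauge generators along the chain. The trivalent branch checks at the vertices of the resolved Tanner graph are kept, now attached to the appropriate endpoint qubits of the extended wires.

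Next I would verify the parameters and the locality condition. The total qubit count is $O(\ell_{\text{max}} \delta n)$, because the resolved Tanner graph has $O(\delta n)$ edges and each is extended to a chain of length at most $\ell_{\text{max}}$. Each wire-extension step adds one qubit together with one independent two-body gauge generator, so the logical dimension $k$ is preserved. All checks remain of weight and degree at most three, being either a two-body wire link or an unchanged trivalent branch check. Locality on $G$ is immediate from the embedding: consecutive wire qubits live on adjacent vertices of $G$, and the $c$-to-one property bounds the number of qubits assigned to any vertex of $G$ by $O(c)$.

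The main obstacle is showing that the distance remains $\Omega(\frac{1}{\omega}d)$ despite the wire extensions. The strategy I would follow is to prove that every minimum-weight dressed logical representative of the extended code can be transformed, by multiplication with the two-body wire gauge generators, into a representative whose support on each extended wire is confined to that wire's endpoint qubits. Such a representative is naturally identified with a logical representative of the unit-length wire code, whose weight is lower bounded by $\Omega(\frac{1}{\omega}d)$ via Theorem~1. Since multiplication by gauge elements preserves the logical class, this bound transfers to the extended code.

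Combining the parameter count, the preserved logical dimension, and the distance bound yields the claimed $[[O(\ell_{\text{max}}\delta n),k,\Omega(\frac{1}{\omega}d)]]$ subsystem code with local checks on $G$, completing the proof outline.
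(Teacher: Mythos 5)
Your overall strategy matches the paper's: the qubit count comes from the $O(\delta n)$ edges of the trivalent resolution, each stretched to a path of length at most $\ell_{\text{max}}$, and the $k$ and $d$ claims are reduced to the unit-length weight-reduction code via a cleaning argument along the stretched wires. The paper organizes this slightly differently --- it proves Lemma~\ref{lem:bare} and Lemma~\ref{lem:d} once for the general construction with arbitrary chain lengths and simply invokes them here, rather than factoring through the unit-length code first --- but the content of your cleaning step (confine the support on each chain to its endpoints without increasing weight, then quote the unit-length bound) is exactly the mechanism inside the proof of Lemma~\ref{lem:d}. The locality and density observations are also as in the paper.

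There is one concrete omission that, as written, breaks every claimed property: you never place single-qubit $X$ gauge checks on the newly inserted chain qubits; you only add the two-body $ZZ$ links. Without them, a single $Z$ on an interior chain qubit commutes with the entire gauge group and is not itself a gauge element, so it is a weight-one bare logical: $k$ grows by one per inserted qubit and the distance collapses to $1$. The single-site $X$ checks are also what your own arguments tacitly rely on. The claim that adding ``one qubit together with one independent two-body gauge generator'' preserves $k$ is not a valid counting principle on its own; what preserves $k$ is that each new qubit comes with an anticommuting pair $(Z_iZ_{i+1},X_i)$ and is therefore a pure gauge qubit. Likewise, your cleaning step needs the single-site $X$'s to dispose of $X$-type support on chain interiors, since the $ZZ$ generators can only move $Z$-type support. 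This is precisely the subtlety the paper emphasizes in its informal discussion (``Removing small errors on the ancilla subsystem''), and it is stated explicitly in Section~\ref{sec:GraphEmbedding}: each length-$\ell_e$ edge is implemented with $O(\ell_e)$ additional qubits carrying single-qubit $X$ gauge checks as well as two-body $ZZ$ checks. With that fix, your outline goes through.
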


\begin{figure}[t]
    \centering
    \includegraphics[page=42, scale=1]{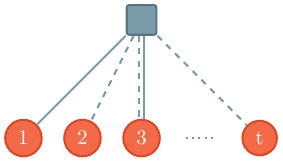}
    \caption{An illustration of existing bounds on code parameters for stabilizer, subsystem, and classical codes. 
    The stabilizer code region (blue) satisfies $kd^{2/(D-1)}\leq O(n)$. 
    The subsystem code region (red) satisfies $kd^{1/(D-1)}\leq O(n)$.
    The classical code region (green) satisfies $kd^{1/D} \leq O(n)$. 
    The point `b' corresponds to families of $D$-dimensional wire codes, see also the construction in Refs.~\cite{bravyi2011subsystem,bacon2015sparse}. 
    The points `a', and `c' correspond to the constructions in Refs.~\cite{portnoy2023local,Lin2023,williamson2024layercodes,Li2024}, and Ref.~\cite{baspin2023combinatorial}, respectively. }
\label{fig:CodeParams}
\end{figure}

Wire codes are holographic\footnote{Our construction is distinct from the holographic encoding isometries introduced in Ref.~\cite{Pastawski2015}.} in the sense that they can be understood as realizing a ``boundary'' code via local checks along chains of ancilla qubits that are routed through a ``bulk''. 
We make this intuition concrete by constructing families of wire codes that have local gauge checks in any given dimension $D\geq 2$. 
These $D$-dimensional wire codes implement the checks of an input stabilizer code on a $(D-1)$-dimensional boundary via ancillary qubits that route the connectivity of the checks through the $D$-dimensional bulk. 
When this construction is applied to a good qLDPC code~\cite{panteleev2020quantum, leverrier2022quantum, Dinur2023} the output $D$-dimensional wire code is optimal in the sense that it has code parameters that saturate the bounds of Ref.~\cite{bravyi2011subsystem} shown in Eq.~\eqref{eq:SubsystemBPT}. 
The parameters achieved by these families of codes are depicted in Fig.~\ref{fig:CodeParams} by the dotted line that joins the point `b' to (1,0).  
This result is captured by the corollaries below. 

\vspace{1cm} 

\begin{corollary}[Optimal local subsystem codes in $D$ dimensions]
\label{cor:SubsystemBPTSaturation}
    Wire codes constructed from $[[n,\Theta(n),\Theta(n)]]$ good qLDPC codes~\cite{Panteleev2022} and the graph embeddings into $D$ dimensional Euclidean space from Ref.~\cite{thompson1977sorting} have code parameters $[[n^{D/(D-1)},\Theta(n),\Theta(n)]]$ which saturate the bounds from Ref.~\cite{bravyi2011subsystem} for all $D\geq 2$. 
\end{corollary}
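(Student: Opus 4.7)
The plan is to apply Theorem~2 to the specific inputs named in the statement. Let $[[n,k,d]]=[[n,\Theta(n),\Theta(n)]]$ denote a good qLDPC code from Ref.~\cite{Panteleev2022}. By the definition of \emph{good qLDPC} the maximum check weight $\omega$ and qubit degree $\delta$ are $O(1)$, so the trivalent resolution of its Tanner graph has $O(n)$ vertices and degree at most three. This fixes the input side of Theorem~2 and reduces the problem to bounding $\ell_{\max}$ for the embedding.

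The second step is to invoke the Thompson-style embedding of this bounded-degree graph into a $D$-dimensional Euclidean box. For a graph of bisection width $\mathrm{bw}$, the bisection lower bound combined with a constructive layout gives a constant-density embedding into a $D$-cube of side $L = O(\mathrm{bw}^{1/(D-1)})$, with longest wire length $\ell_{\max} = O(L)$. The good qLDPC codes of Ref.~\cite{Panteleev2022} have Tanner graphs whose bisection width is $\Theta(n)$ (they come from expander-based constructions with constant edge expansion), so the embedding achieves $\ell_{\max} = O(n^{1/(D-1)})$ at density $c=O(1)$.

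Plugging into Theorem~2 then yields a wire code with parameters
\begin{align}
    [[O(\ell_{\max}\delta n),\, k,\, \Omega(d/\omega)]] = [[O(n^{D/(D-1)}),\, \Theta(n),\, \Theta(n)]],
\end{align}
whose gauge checks are local on the $D$-cube. Writing $N = n^{D/(D-1)}$, the identity $k\cdot d^{1/(D-1)} = \Theta(n)\cdot\Theta(n^{1/(D-1)}) = \Theta(N)$ verifies saturation of the subsystem-code bound $kd^{1/(D-1)}\leq O(N)$ from Ref.~\cite{bravyi2011subsystem} for every $D\geq 2$, which is the claim.

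The main obstacle I anticipate is step two: Thompson's original result is stated in two dimensions, so for $D\geq 3$ one must either point to an established higher-dimensional VLSI layout or build one from scratch, for instance by recursively splitting with bisection hyperplanes and packing the two halves into opposite sides of the $D$-cube. One must also check that the resulting routing is genuinely $c$-to-one with a constant independent of $D$, so that the density hypothesis feeding Theorem~2 is satisfied, and that the bisection width of the specific codes from Ref.~\cite{Panteleev2022} is $\Theta(n)$ rather than only $\Omega(n^{\alpha})$ for some $\alpha<1$; both facts are standard for expander-based qLDPC constructions but should be invoked explicitly.
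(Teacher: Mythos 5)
Your overall route is the paper's route: feed a good qLDPC code into the wire-code machinery with a constant-density embedding into a $D$-dimensional box of side $\Theta(n^{1/(D-1)})$, then read off $[[O(n^{D/(D-1)}),\Theta(n),\Theta(n)]]$ and check $kd^{1/(D-1)}=\Theta(N)$. The difference is in how the embedding step is justified, and that is exactly where your argument is weakest. You derive $\ell_{\max}=O(n^{1/(D-1)})$ from the fact that the Tanner graph has bisection width $\Theta(n)$, but this reasoning is inverted: a large bisection width only yields a \emph{lower} bound on the layout size, telling you that you cannot do better than side $\Omega(n^{1/(D-1)})$; it does not by itself produce a constant-density layout achieving that side length, and recursive-bisection layouts of expanders typically lose polylogarithmic factors, which would spoil the exact saturation claimed. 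The paper sidesteps bisection width entirely: it places the (degree-reduced) data-qubit branches on the bottom face $[m]^{D-1}\times\{0\}$ and the check sites on the top face of a box $[m]^{D-1}\times[c_D m]$ with $m=\lceil n^{1/(D-1)}\rceil$, invokes Lemma~\ref{lem:Ddimrouting} (the routing result of Ref.~\cite{thompson1977sorting}, stated for all $D\geq 2$) to obtain edge-disjoint paths for any $\leq c\cdot n$ qubit--check pairs between the two faces, and uses the edge-colouring of qubit--check incidences in Lemma~\ref{lem:colourclasses} to repeat the routing only $\omega\delta+1=O(1)$ times, which is what guarantees constant qubit density with no polylog loss (Algorithm~\ref{alg:high-d-wire-codes-alg} and Theorem~\ref{thm:wire-codes}; the code parameters then follow from Corollary~\ref{cor:k} and Lemma~\ref{lem:d}). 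So the "obstacle" you flag at the end is real, and the cross-wire routing lemma plus the colouring argument is precisely how the paper closes it; with that substitution for your step two, your proof is the paper's proof.
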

\begin{corollary}[Multiple blocks of optimal local subsystem codes in $D$ dimensions]
    $\Theta(N^{1-\alpha})$ blocks of size $\Theta(N^\alpha)$ containing optimal wire codes from Corollary~\ref{cor:SubsystemBPTSaturation} results in subsystem codes with parameters $[[\Theta(N),\Theta(N^{1-\alpha/D}),\Theta(N^{\alpha(D-1)/D)}]]$ saturating the bounds from Ref.~\cite{bravyi2011subsystem} for all $D\geq 2$ and $\alpha\in[0,1]$. 
\end{corollary}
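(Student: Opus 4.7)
The plan is to realize the claimed code as a disjoint union of independent blocks of the optimal $D$-dimensional wire codes from Corollary~\ref{cor:SubsystemBPTSaturation}, tune the per-block size so that the totals match the stated $N$-scalings, and then verify that each of the bounds of Ref.~\cite{bravyi2011subsystem} is saturated.

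First, I would choose the per-block input size $n$ so that a single wire-code block has $\Theta(N^\alpha)$ physical qubits. By Corollary~\ref{cor:SubsystemBPTSaturation}, each block has $[[n^{D/(D-1)},\Theta(n),\Theta(n)]]$, so setting $n=\Theta(N^{\alpha(D-1)/D})$ yields blocks of size $\Theta(N^\alpha)$ encoding $\Theta(N^{\alpha(D-1)/D})$ logical qubits with distance $\Theta(N^{\alpha(D-1)/D})$. I would then place $M=\Theta(N^{1-\alpha})$ such blocks side by side. For the local implementation, each block fits in a $D$-dimensional hypercube of side $\Theta(N^{\alpha/D})$, and stacking $M$ of them tiles a hypercube of side $\Theta(N^{1/D})$, so the combined code remains local in $D$ dimensions.

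Next, I would argue the parameters of the combined code. Since the blocks act on disjoint qubit sets, the joint stabilizer and gauge groups are tensor products of the per-block ones. Standard facts about disjoint unions of subsystem codes then give: total physical qubits $n_{\text{tot}}=M\cdot\Theta(N^\alpha)=\Theta(N)$; total logicals $k_{\text{tot}}=M\cdot\Theta(N^{\alpha(D-1)/D})=\Theta(N^{1-\alpha+\alpha(D-1)/D})=\Theta(N^{1-\alpha/D})$; and minimum-weight non-trivial logical supported in a single block, giving $d_{\text{tot}}=\Theta(N^{\alpha(D-1)/D})$. This matches the claimed $[[\Theta(N),\Theta(N^{1-\alpha/D}),\Theta(N^{\alpha(D-1)/D})]]$.

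Finally, I would check that both subsystem bounds from Ref.~\cite{bravyi2011subsystem} are saturated: $k_{\text{tot}} d_{\text{tot}}^{1/(D-1)}=\Theta(N^{1-\alpha/D})\cdot\Theta(N^{\alpha/D})=\Theta(N)=\Theta(n_{\text{tot}})$, and $d_{\text{tot}}=\Theta(N^{\alpha(D-1)/D})\leq\Theta(N^{(D-1)/D})=\Theta(n_{\text{tot}}^{(D-1)/D})$ for all $\alpha\in[0,1]$. There is no real obstacle here; the only thing that requires care is the arithmetic of the exponents and the spatial packing, both of which are elementary. The substance of the corollary is already contained in Corollary~\ref{cor:SubsystemBPTSaturation}.
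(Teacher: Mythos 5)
Your proposal is correct and follows the natural block-concatenation argument that the corollary implicitly relies on (the paper states this corollary without a separate proof, deferring entirely to Corollary~\ref{cor:SubsystemBPTSaturation}). The exponent arithmetic, the tiling of $\Theta(N^{1-\alpha})$ side-$\Theta(N^{\alpha/D})$ hypercubes into a side-$\Theta(N^{1/D})$ hypercube, and the check that $k_{\text{tot}}d_{\text{tot}}^{1/(D-1)}=\Theta(n_{\text{tot}})$ all hold as you state.
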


While the literature on local codes in Euclidean space abounds, there are relatively few works on codes with non-Euclidean locality. 
This has quickly become a topic of interest as recent hardware developments have raised the possibility of architectures with connectivity that is neither local Euclidean, nor all-to-all~\cite{Moses2023,Evered2023,Storz2023,Alexander2024}. 
In an effort to formulate a generalised notion of local connectivity, some works have made use of graph theoretic ideas~\cite{baspin2021connectivity, baspin2023improved, baspin2021quantifying, baspin2023combinatorial}. Among these results, the \emph{expansion} of a graph has emerged as a useful notion of abstract connectivity~\cite{baspin2023combinatorial}. 
A corollary of our work is that the more expanding an architecture is, the better the codes are that we can provably embed into it while preserving locality. Here, we rely on the notion of edge expansion specified in Definition~\ref{def:exp}. 

\begin{corollary}[Local subsystem codes on graphs with expansion]
    Wire codes constructed from $[[n,\Theta(n),\Theta(n)]]$ good qLDPC codes~\cite{Panteleev2022, leverrier2022quantum} and graph embeddings into an infinite family $\{G_n\}_n$ of $\alpha$-expander graphs, yields a family $\{\mathcal{C}_n\}_n$ of subsystem codes that are local on $G_n$ with parameters $[[n, \Omega\left({\alpha n}/{\text{polylog}(n)}\right), \Omega\left({\alpha n}/{\text{polylog}(n)}\right)]]$.
\end{corollary}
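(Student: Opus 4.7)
The plan is to invoke the Wire code graph embedding theorem on a good qLDPC input code with parameters $[[N,\Theta(N),\Theta(N)]]$ from Ref.~\cite{Panteleev2022}, whose check weight $\omega$ and qubit degree $\delta$ are both $O(1)$. The trivalent resolution of its Tanner graph is then a bounded-degree graph with $\Theta(N)$ vertices, so the output code parameters are governed entirely by the embedding parameters $c$ and $\ell_{\max}$ into the target expander $G_n$.

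The key ingredient I would import is a routing/embedding lemma for $\alpha$-expanders: an infinite family $\{G_n\}_n$ of $n$-vertex $\alpha$-expander graphs admits $O(1)$-to-one embeddings of any bounded-degree graph of size $\Theta(\alpha n / \text{polylog}(n))$ with maximum edge length $\ell_{\max} = O(\text{polylog}(n)/\alpha)$. This is the standard qualitative behavior obtained from multicommodity-flow and short-path routing arguments on expanders, and is precisely the type of embedding exploited in Ref.~\cite{baspin2023combinatorial} to place qLDPC codes into expander architectures. Applying this lemma to the trivalent resolution of the good qLDPC Tanner graph with the choice $N = \Theta(\alpha n / \text{polylog}(n))$ then matches the sizes.

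Substituting into the output parameters of the Wire code graph embedding theorem, the physical qubit count is $O(\ell_{\max}\,\delta\,N) = O\bigl((\text{polylog}(n)/\alpha)\cdot(\alpha n / \text{polylog}(n))\bigr) = O(n)$, as required. The logical count and distance transfer as $k = \Theta(N) = \Omega(\alpha n / \text{polylog}(n))$ and $d' = \Omega(d/\omega) = \Omega(N) = \Omega(\alpha n / \text{polylog}(n))$, and the resulting checks are local on $G_n$ by construction of the wire code. The main obstacle is the expander embedding lemma itself, specifically securing the precise joint dependence of $\ell_{\max}$ on $n$ and on the expansion parameter $\alpha$ at constant congestion; once that is in hand, the remainder is bookkeeping of the wire code parameters and a single substitution of the value of $N$.
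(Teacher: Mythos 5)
Your overall strategy---take a good qLDPC code on $N = \Theta(\alpha n/\mathrm{polylog}(n))$ qubits, route its (resolved) Tanner graph through the expander with constant congestion, and read off the wire code parameters---is the same as the paper's, which proves this corollary by plugging the good qLDPC codes into Theorem~\ref{thm:expander-wire-codes} (built on Algorithm~\ref{alg:expander-wire-codes-alg} and the routing guarantee of Lemma~\ref{lem:routing-expanders}). The one substantive difference is how the physical qubit count is bounded, and here your version creates an obstacle that the paper's does not have. You route everything through the generic graph-embedding bound $O(\ell_{\max}\,\delta\,N)$ and therefore need a \emph{dilation} bound $\ell_{\max} = O(\mathrm{polylog}(n)/\alpha)$ on the individual routed paths, which you correctly flag as the unresolved ingredient. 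That bound is not actually needed: the routing theorem of Ref.~\cite{chuzhoy2016routing} already guarantees that every edge of $G$ lies on at most $15$ paths, and after splitting the qubit--check pairs into the $\omega\delta+1 = O(1)$ colour classes of Lemma~\ref{lem:colourclasses}, each edge of $G$ carries $O(1)$ ancilla qubits in total. Since $G$ has degree $O(1)$ and hence $O(n)$ edges, the \emph{sum} of all path lengths---which is what actually counts the ancilla qubits---is at most $(\text{congestion})\times|E| = O(n)$, with no control on the length of any single path required. So your ``main obstacle'' dissolves if you replace the $\ell_{\max}\cdot(\text{number of paths})$ accounting with congestion-times-$|E|$ accounting; the rest of your bookkeeping ($k=\Theta(N)$ preserved, $d' \geq d/\omega = \Omega(N)$, and the choice $N = \Theta(\alpha n/\mathrm{polylog}(n))$ to meet the condition $n_\oldcode + m_\oldcode \leq c_0\,\alpha n/\mathrm{polylog}(n)$) matches the paper. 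Do make the bounded-degree assumption on $\{G_n\}_n$ explicit, since the $|E|=O(n)$ step relies on it.
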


Wire codes present a new method to address the challenge of implementing highly efficient codes, with highly connected Tanner graphs, under general connectivity constraints that arise from hardware considerations.  
The code properties of wire codes are summarized in Table~\ref{tab:summary}, and the construction of local wire codes in $D$-dimensions is summarized in Alogrithm~\ref{alg:high-d-wire-codes-alg}.

\setlength\tabcolsep{1.5cm}
\begin{table}[h]
    \centering
    \begin{tabular}{ c c}
    Input code                  &   Output wire code \\
    \hline \hline
    General stabilizer code      &   local subsystem code \\
    Physical qubit              &   Copy qubit branches \\
    Check                   &   Ancillary qubit branches \\
    Max check weight $w$       &   Max check weight 3 \\
    Max qubit degree $\delta$       &   Max qubit degree 3 \\
    $[[n,k,d]]$                 &   $[[O( \ell_{\text{max}} \delta n),k,\Omega(\frac{1}{w}d)]]$
\end{tabular}
    \caption{Summary of the input and output code properties in the wire code construction.}
    \label{tab:summary}
\end{table}

\subsection{Prior Work}

This work extends the literature on subsystem codes that are not derived from topological codes. Among the touchstone works on this subject, in Ref.~\cite{bravyi2011subsystem} Bravyi shows how to obtain optimal 2D subsystem codes, see Ref.~\cite{Devakul2020b} for an extension of this construction to 3D. Bacon et.~al.~\cite{bacon2015sparse} introduce a systematic method to map a circuit to a subsystem code local in $D$-dimensions. Their construction can be updated to saturate the bounds of Ref.~\cite{bravyi2011subsystem} in all dimensions given a fault-tolerant syndrome extraction circuit for a good quantum LDPC code~\cite{leverrier2022quantum,panteleev2020quantum,Dinur2023}.

Since the foundational work in Refs.~\cite{bravyi2011subsystem,bacon2015sparse}, obtaining local sparse codes from non-local structures has become a field of study in itself. 
Some examples based on stabilizer codes are Ref.~\cite{portnoy2023local} and Ref.~\cite{williamson2024layercodes} -- the construction of Ref.~\cite{portnoy2023local} was later made more explicit in Ref.~\cite{Lin2023} for balanced product codes~\cite{breuckmann2020balanced} and further improved in Ref.~\cite{Li2024}.
Other work has focused on weight reduction of stabilizer codes, including  Refs.~\cite{hastings2016weight,hastings2023quantum,Sabo2024}. 
Our weight reduction results achieve lower weight and degree than previous works by following a strategy that is reminiscent of the classical weight reduction procedure reviewed in Ref.~\cite{Sabo2024} using subsystem codes in place of stabilizer codes. 
Similarly, optimal local classical codes are obtained in Ref.~\cite{baspin2023combinatorial}, while non-Pauli Hamiltonians are explored in Ref.~\cite{apel2023simulatingldpccodehamiltonians} through the lens of Hamiltonian simulation.

\subsection{Open Questions} 

Our results raise a number of questions for future work. 

\textbf{Fault tolerance}: The wire codes that obtain optimal parameter scaling on finite dimensional Euclidean lattices have only global stabilizers. 
It is generally difficult to make subsystem codes with local checks and global stabilizers fault tolerant.\footnote{We remark that the wire code construction is expected to preserve fault-tolerance when only a constant number of additional qubits per original check and qubit are introduced, the overhead is expected to scale with the number of additional checks and qubits.} 
What additional procedures can be performed to make these codes fault tolerant, and what is their overhead? 
An interesting result in this direction for the Bacon-Shor code was presented in Ref.~\cite{Gidney2023}. 

\textbf{Hamiltonians}: $k$-local subsystem codes naturally define a family of Hamiltonians on the Tanner graph. 
These Hamiltonians generally have tunable parameters that can drive nontrivial phase transitions~\cite{ellison2022pauli} and interesting quantum dynamics~\cite{Wildeboer2021}. 
What are the phase diagrams of these Hamiltonians, and is the model $H = - \sum_g g$ gapped? The question of mapping a geometrically non-local system to a \emph{gapped} 2D local system was addressed in \cite{apel2023simulatingldpccodehamiltonians} albeit using radically different tools that circumvent the difficulties posed by subsystem Hamiltonians.

\textbf{Optimal codes on arbitrary graphs}: Can the wire code construction produce subsystem codes with optimal scaling code parameters given the connectivity of some more general families of graphs? Is the scaling achieved in Corollary \ref{cor:expander-wire-codes} optimal? 
We remark that for Tanner graphs with general connectivity, tight bounds on code parameters are yet to be formulated. See Appendix \ref{sec:upper-bound} for a partial result in this direction. 
Can Theorem \ref{thm:expander-wire-codes} be extended to map an arbitrary stabilizer code to a local \emph{stabilizer} code on an arbitrary graph?

\textbf{Floquet code weight reduction}: The subsytem codes we have introduced here defined a simple floquet code procedure with weight three checks. 
More generally, floquet code syndrome extraction procedures can always be found that have weight two checks. 
What is the most efficient such procedure given desired stabilizer code checks?

\subsection{Examples} 

We now illustrate our construction by depicting the 2D wire codes that are output for several simple examples of input codes. 
These examples are presented purely for the purposes of illustration as they are too small to benefit from our weight reduction procedure. 
For an introduction to the ideas used to construct these codes, see Section~\ref{sec:MainIdeas}. 

\subsubsection{Repetition Code}

The first example is based on the $$[[3,1,1]]$$ three qubit repitition code, which has stabilizer checks $ZZI,IZZ$. 
The Tanner graph of the $[[7,1,1]]$ wire code representation is shown below. 
\begin{equation}
    \label{fig:algstep0}
    \vcenter{\hbox{\includegraphics[page=16, scale=1]{diagrams-wire-codes.pdf}}}
\end{equation}
Here circles represent qubit nodes and boxes represent check nodes. 
By convention, the internal check nodes are not explicitly pictured between the ancillary qubits. 
There is an additional single body $X$ check for each ancillary qubit that is not explicitly depicted. 

\subsubsection{Five qubit code}

The $[[5,1,3]]$ error correcting code has the following stabilizer checks 
\begin{align}
    XZZXI,IXZZX,XIXZZ,ZXIXZ.
\end{align} 
The Tanner graph of the $[[47,1,\leq 3]]$ 2D wire code representation of this code is shown below. 
\begin{equation}
\label{fig:algstep0b}
    \vcenter{\hbox{
    \includegraphics[page=34, scale=.8]{diagrams-wire-codes.pdf}}}
\end{equation}
Here the dashed lines denote $X$-type Pauli operators involved in a check. 
Again there is an additional single body $X$ check for each ancillary qubit that is not depicted. 

\subsubsection{Shor's Code}

Shor's $[[9,1,3]]$ code has the following stabilizer checks~\cite{Shor1995}
\begin{align}
    \begin{array}{c c c c c c c c c }
        X & X & X & X & X & X & I & I & I  \\
        I & I & I & X & X & X & X & X & X  \\
        Z & Z & I & I & I & I & I & I & I  \\
        I & Z & Z & I & I & I & I & I & I  \\
        I & I & I & Z & Z & I & I & I & I  \\
        I & I & I & I & Z & Z & I & I & I  \\
        I & I & I & I & I & I & Z & Z & I  \\
        I & I & I & I & I & I & I & Z & Z .  
    \end{array}
\end{align}
The $[[135,1,\leq 3]]$ 2D wire code representation of Shor's code is shown below. 
\begin{equation}
\label{fig:algstep0c}
    \vcenter{\hbox{
    \includegraphics[page=11, scale=.75]{diagrams-wire-codes.pdf}}}
\end{equation}
The conventions in the figure above follow the previous examples.

\subsection{Section Outline} 

The remainder of this work is laid out as follows. 
In Section~\ref{sec:WireCodeConstruction} we introduce the wire code construction first informally and then formally. 
In Section~\ref{sec:CodeProperties} we provide proofs of the key properties of wire codes. 
In Section~\ref{sec:Discussion} we discuss our results.

\section{Wire Code Construction} 
\label{sec:WireCodeConstruction}

In this section we introduce the wire code construction for general graphs and explain how it leads to weight and degree reduction.

\subsection{Main ideas}
\label{sec:MainIdeas}

In this section we present an informal introduction to the main ideas we use to map a stabilizer code to a subsystem code. 
We demonstrate these ideas through a number of simple examples. 
We first introduce a graphical notation for a check in a subsystem code. A blue square represents a gauge check, and orange circles are qubits; the solid lines correspond to a $Z$ operators, while the dashed lines correspond to $X$ operators. Correspondingly, two parallel solid and dashed lines correspond to a $Y$ operator. In the following example, the check represented is $Z_1 X_2 Y_3 \dots X_t$
\\

\begin{equation}
    \vcenter{\hbox{\includegraphics[page=1]{diagrams-wire-codes.pdf}}}
\end{equation}
\\

\noindent Using the above notation, the 5-qubit code is represented as follows.
\\
\begin{equation}
    \vcenter{\hbox{
    \includegraphics[page=2]{diagrams-wire-codes.pdf}}}
\end{equation}
\\

\noindent Our construction aims to address the following problem: in some architecture, it might happen that the qubits involved in the check, say $s_1$, are far apart. It would then be beneficial to be able to break down this interaction into local terms. 

\paragraph{Breaking down $X_1Z_2Z_3X_4$, a first attempt.} 
Our first attempt focuses on breaking down the $X_1Z_2Z_3X_4$ stabilizer into local gauge checks. 
A naive approach, based on the strategy that works for classical codes~\cite{baspin2023combinatorial,Sabo2024}, is shown below. We break down $s_1$ into local checks and additionl ancilla qubits. 
\\
\begin{equation}
    \vcenter{\hbox{\includegraphics[page=3]{diagrams-wire-codes.pdf}}}
    \label{fig:WeightReduction}
\end{equation}
\\

\noindent We differentiate between the $\data$ register, i.e. the qubit from the original code, and the $\anc$ register, which contains the newly introduced qubits.

One can ascertain that for any normalizer of this new code, its restriction to the $\data$ subsystem has to be in the normalizer of the original code. Formally this emerges from the observation that multiplying the gauge checks together gives $X_1Z_2Z_3X_4\otimes\idty_\anc$. 
This is shown in the figure below. 
\\

\begin{equation}
    \vcenter{\hbox{\includegraphics[page=25]{diagrams-wire-codes.pdf}}}
\end{equation}
\\

\noindent In green we highlighted the gauge checks producted together, and the resulting operator can be seen to be $X_1Z_2Z_3X_4\otimes \idty_\anc$.
However, this does not prevent the introduction of small errors on the $\anc$ subsystem even when the $\data$ is merely acted on by $\idty_\data$. 
This is due to the fact that $\idty_\data$ is in the normalizer of $X_1Z_2Z_3X_4$. A $Z$ operator on the $\anc$ subsystem is now a logical:
\\
\begin{equation}
    \vcenter{\hbox{\includegraphics[page=8,scale=1]{diagrams-wire-codes.pdf}}}
    \label{fig:example1}
\end{equation}
\\

\paragraph{Removing small errors on the ancilla subsystem.} To remediate the issue of the single-site $Z$ logical operators, it suffices to add a single site $X$ check to each ancilla qubit, see below. 
\\
\begin{equation}
    \vcenter{\hbox{\includegraphics[page=4,scale=1]{diagrams-wire-codes.pdf} }}
    \label{fig:example2}
\end{equation}
\\
These single site $X$ operators ensure the elimination of the small errors on the ancilla subsystem, and are the origin of the non-commuting nature of the new checks.
Now, the number of logical qubits $k$ is preserved. 
Any logical of the original code can be mapped to a bare logical of the new subsystem code via multiplication with single site $X$ operators on the ancilla subsystem, such that the (anti-)commutation relations are preserved. 

\paragraph{A \textit{topological} mapping.} 
The construction we have outlined thus far can be used to break up checks in a stabilizer code into low-weight, local checks in a way that preserves $k$ and $d$. Importantly, the properties of this mapping do not depend on the precise geometric parameters of the new checks. This observation allows us to `stretch' the original stabilizers to any desired length, see Figure \ref{fig:WeightReductionExtension}. 

\begin{figure}[t]
    \centering
    \includegraphics[page=5, scale =.7]{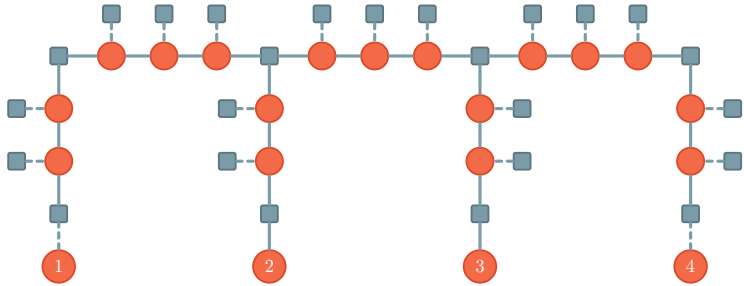}
    \caption{Showcasing how to implement the $X_1Z_2Z_3X_4$ stabilizer with local interactions when the data qubits are even further apart. To lighten the notation, two qubits joined by a solid line represent a $ZZ$ check between them.}
    \label{fig:WeightReductionExtension}
\end{figure}

It is now possible to give a comprehensive picture of the result. Once the $X_1Z_2Z_3X_4$ has been embedded through local checks, the resulting 5-qubit code is depicted in Figure~\ref{fig:xzzx-stretched}.
\begin{figure}[t]
    \centering
    \includegraphics[page=10, scale =.7]{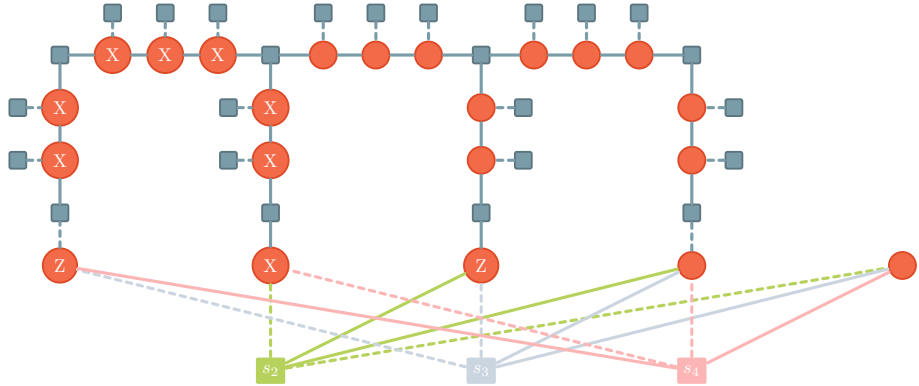}
    \caption{The 5-qubit code with the $X_1Z_2Z_3X_4$ stabilizer mapped to local checks. The Pauli operators are an example of a bare logical operator.}
    \label{fig:xzzx-stretched}
\end{figure}

\paragraph{Generalizing the mapping. }
Naturally, one could ask how to localize a general $4$-body stabilizer of the form $P_1P_2P_3P_4$ instead of $X_1Z_2Z_3X_4$. 
In doing so it is important to keep in mind the main reasons why the previous transformation can be expected to almost preserve the distance of the original code is because 1) by multiplying a subset of the gauge checks we can recover the original stabilizer on the data qubits, 2) we eliminate phase errors with single-site $X$ gauge checks. 
Recovering these same properties for $P_1P_2P_3P_4$ is straightforward: we replace the $Z\otimes Z$ checks acting on both the $\data$ and $\anc$ subsystems by checks of the form $P_i \otimes Z$, as illustrated in Figure \ref{fig:example3}. 

\begin{figure}[H]
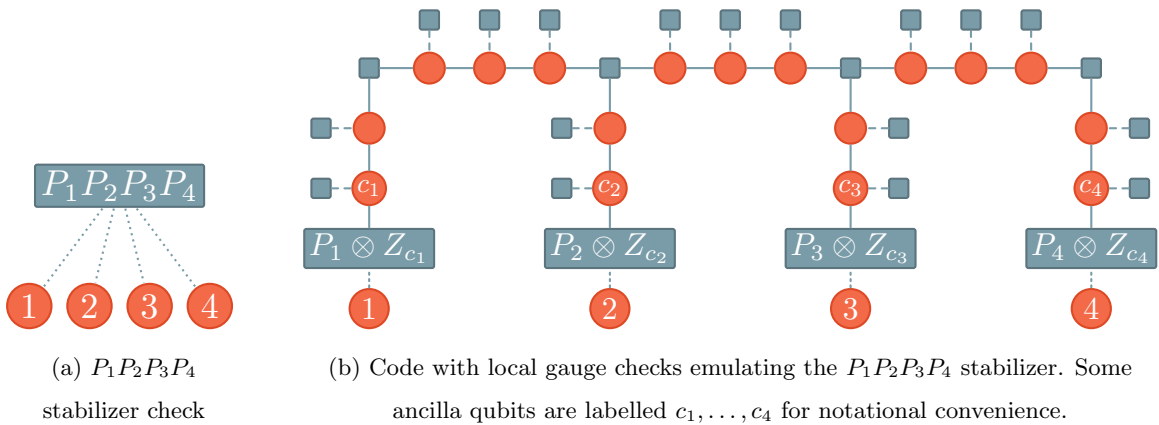

    \centering
    \subfloat[\centering $P_1P_2P_3P_4$ stabilizer check]{{\includegraphics[page=22,scale=.8]{diagrams-wire-codes.pdf} }}
    \qquad
    \subfloat[\centering  Code with local gauge checks emulating the $P_1P_2P_3P_4$ stabilizer. Some ancilla qubits are labelled $c_1, \dots, c_4$ for notational convenience.]{{\includegraphics[page=23,scale=.8]{diagrams-wire-codes.pdf}}}
    \caption{Dotted lines are meant to distinguish general Pauli operators from $Z$ (solid lines), $X$ (dashed lines) and $Y$ (dashed and solid double lines) operators.}
    \label{fig:example3}
\end{figure}

\paragraph{Degree reduction.} 
Along with a method to break down long-range checks, we also introduce a construction to obtain a code of degree at most three by adding ancilla qubits and gauge checks. This does not count single body gauge checks in the qubit degree. In the following figure we illustrate this idea by focusing on a qubit connected to multiple checks in the original code. On the left are the original checks, on the left is there degree-reduced version:
\begin{equation}
    \vcenter{\hbox{\includegraphics[page=26,scale=.8]{diagrams-wire-codes.pdf} }} \qquad \longrightarrow \qquad \vcenter{\hbox{\includegraphics[page=27,scale=.55]{diagrams-wire-codes.pdf}}}
    \label{eq:DegRed}
\end{equation}
The construction follows the same idea as the previous section: after adding ancillas and new gauge checks, we must be able to recover the original stabilizers, see Figure \ref{fig:degree-reduction-explanation}.

\begin{figure}[H]
    \centering
    \includegraphics[page=28, scale =.45]{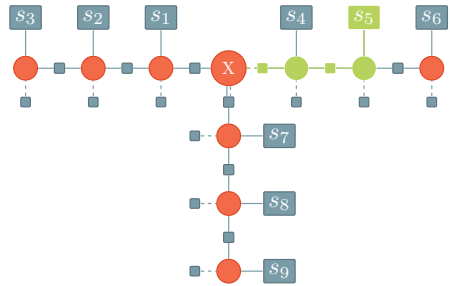}
    \caption{ The value of, for example, the stabilizer $s_5$ can be recovered by multiplying the value of the checks highlighted in green. Due to the nature of the checks, whenever the operator acting on qubit $1$ anticommutes with $X$, then the ancilla qubits connected to the checks $s_4, s_5, s_6$ have to be acted on by the operator $X$, in order to be in the normalizer. In a way we are `carrying' the commutator of the operator on qubit $1$ to $s_4, s_5$ and $s_6$.}
    \label{fig:degree-reduction-explanation}
\end{figure}

\paragraph{What is the dressed distance?}
The careful reader might notice that the weight of the dressed logical operators can be reduced by cleaning a bare logical onto the ancillary subsystem induced by a check. However, the single-site $X$-type checks cannot affect the operators on the data subsystem. Additionally, the $Z$-type checks can only either reduce to the application of a stabilizer of the original code on the data subsystem, or will \emph{always} leave a non-trivial operator on the ancilla subsystem. 
From these observations, one can convince themselves that the dressed distance is at least $\frac{1}{\omega}d_\oldcode$, see Lemma~\ref{lem:d} for further details.

\paragraph{Bringing everything together. }
Below is an illustration of the 2D wire code mapping applied to the $[[5,1,3]]$ code. To avoid clutter, we abstain from fully specifying the checks when they are clear from context, including the single-site $X$'s.
Here we have chosen generating checks $XZZXI,IXZZX,XIXZZ,ZXIXZ.$
\begin{equation}
    \vcenter{\hbox{\includegraphics[page=34, scale=.8]{diagrams-wire-codes.pdf}}}
\end{equation}

\subsection{Wire code construction in two dimensions}

In this section we formally state how our construction can be applied to map an arbitrary LDPC code to a 2D-local subsystem code. 

\paragraph{Degree reduction.}
The process of degree reduction can be succinctly described in 2D by introducing the notion of ``branches". A qubit in the support of $\delta_Z$ $Z$-operators, $\delta_X$ $X$-operators, and $\delta_Y$ $Y$-operators will be mapped to a $Z$-branch of length $\delta_Z$, an $X$-branch of length $\delta_X$, and a $Y$-branch of length $O(\delta_Y + \delta_X)$. 

\begin{definition}[$P$-branch]
\label{def:branch}
    For a target qubit `$q$' a $P$-branch of length $\delta$ consists of a ``copy" set of qubits $\{r_1, r_2, \dots, r_\delta\}$, along with the checks $P_qZ_{r_1}$, $Z_{r_i} Z_{r_{i+1}}$ for $i \in [1,\delta-1]$, and $X_{r_i}$ for $i \in [1,\delta]$.
\end{definition}

\begin{figure}[H]
    \centering
    \includegraphics[page=35,scale=.6]{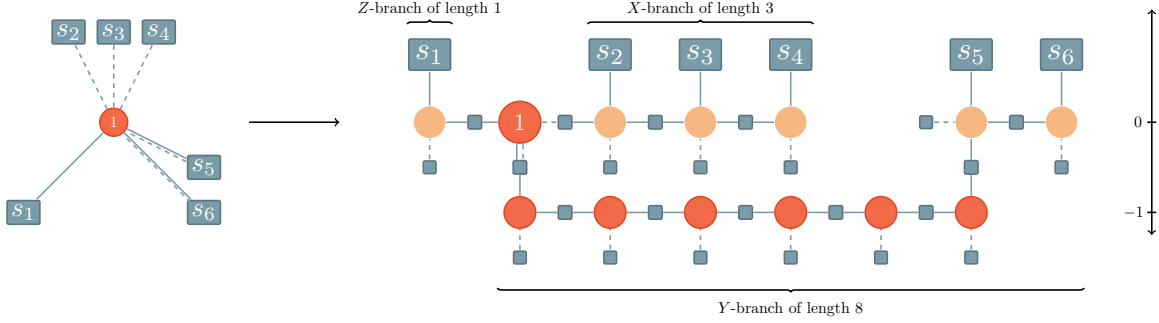} 
    \caption{Degree reduction in 2D. Different branches of different sizes. Note the $Y$-branch incurs an overhead of $\delta_X + \delta_Y$ due to the branch going `behind' the $X$-branch. This overhead is necessary in our layout, if we want to avoid stacking multiple qubits at a single location, due to the way the checks connect to the $Y$-branch.}
    \label{fig:example4}
\end{figure}

Reducing the degree of a qubit occupies a line of width $O(\delta_X+\delta_Z+\delta_Y)$. 
This is because, if we are to avoid stacking multiple qubits on the same site, a $Y$-branch of length $\geq 1$ has to extend past the $X$-branch in order to preserve the locality of the code. 
The peach coloured qubits in Fig.~\ref{fig:example4} are called \emph{copy} qubits and serve to implement a localized version of the checks.

Below we formalize the degree reduction procedure in 2D via Algorithm~\ref{alg:degree-reduction}. 

\begin{algorithm}[H]
\caption{DegreeReduce}
\label{alg:degree-reduction}
\begin{algorithmic}
    \Require qubit $q \in [1, \dots, n]$ of $\oldcodespace$
        \State Let $\delta_Z$ be the number of $Z$ operators $q$ is in the support of. Define $\delta_X,\delta_Y$ similarly.
        \State Introduce a $Z$-branch of length $\delta_Z$, and similarly for $X$ and $Y$. These branches will be contained between the coordinates $((q-1)\cdot\delta,0)$, and $(q\cdot\delta,0)$.
        \State For each Pauli $P\in \{X,Y,Z\}$, let $\{s_{1}, \dots, s_{\delta_P}\}$ be the set of $P$-type stabilizers supported on $q$. Then each of these stabilizers is now supported on a different copy qubit of $q$'s $P$-branch.
\end{algorithmic}
\end{algorithm}

\paragraph{Embedding stabilizers.}

To weight reduce the checks we map them to $Z$-branches that are connected to copy qubits of the appropriate Pauli type for the corresponding data qubits. 
For example, consider the problem of embedding the checks $\{Z_1Z_2, X_1X_2\}$ that act on a pair of qubits located arbitrarily far apart in 2D.
\begin{equation}
    \vcenter{\hbox{\includegraphics[page=37, scale=.5]{diagrams-wire-codes.pdf}}}
\end{equation}
First we introduce appropriate Pauli branches via the degree-reduction step.
\begin{equation}
    \vcenter{\hbox{\includegraphics[page=38, scale=.5]{diagrams-wire-codes.pdf}}}
\end{equation}
Next, we map the $Z_1Z_2$ check to a $Z$-branch connecting the copy qubit belonging to the $Z$-branch of qubit $1$, to the copy qubit belonging to the $Z$-branch of qubit $2$.
\begin{equation}
    \vcenter{\hbox{\includegraphics[page=39, scale=.5]{diagrams-wire-codes.pdf}}}
\end{equation}
We perform a similar procedure for the $X_1X_2$ check.
\begin{equation}
    \vcenter{\hbox{\includegraphics[page=31, scale=.5]{diagrams-wire-codes.pdf}}}
\end{equation}

This weight reduction procedure is formalized by Algorithm~\ref{alg:check-embedding}.
\begin{algorithm}[H]
\caption{EmbedCheck}
\label{alg:check-embedding}
\begin{algorithmic}
    \Require check $s \in [1, \dots, m]$ of $\oldcodespace$
        \State Let $(q_1, \dots, q_{\omega_s})$ be the qubits in the support of $s$. 
        
        \ForEach{$q' \in (q_1, \dots, q_{\omega_s})$}
            \State Let $P_{q'}$ be the Pauli element of $s$ supported on $q'$
            \State Let $r_{q'}$ be a copy qubit from the $P_{q'}$-branch of $q'$, such that $r_{q'}$ has degree less than $3$.
            \State Add a $Z$-branch from $(r_{q'},0)$ to  $(r_{q'},s)$
        \EndFor

        \ForEach{$i \in (1, \dots, \omega_s -1)$}
            \State Add a $Z$-branch from $(r_{q_i},s)$ to $(r_{q_{i+1}},s)$
        \EndFor
\end{algorithmic}
\end{algorithm}

\paragraph{Summary of the 2D construction.}
Having introduced the elementary steps of the mapping above, the overall construction is concisely described by Algorithm \ref{alg:2d-wire-alg}. 

\begin{algorithm}[H]
\caption{2D wire code construction}
\label{alg:2d-wire-alg}
\begin{algorithmic}
    \Require A code $\mathcal{C}$, with parameters $[[n,k,d]]$, 
    $n_C$ stabilizer checks, maximum check weight $\omega$, and maximum degree~$\delta$.
    \Ensure A subsystem code that is local in 2D with parameters $[[\Theta(\delta n^2),k,O(\frac{2}{w}d)]]$.
    \ForEach {qubit $q \in [1, \dots, n]$ of $\oldcodespace$}
        \State Run DegreeReduce on $q$ (Algorithm \ref{alg:degree-reduction})
    \EndFor
    \ForEach {check $s \in [1, \dots, n_C]$ of $\oldcodespace$}
        \State Run EmbedCheck on $s$ (Algorithm \ref{alg:check-embedding})
    \EndFor  
\end{algorithmic}
\end{algorithm}

We now state Theorem~\ref{thm:2d-wire-codes} which establishes the properties of the two dimensional wire codes produced by Algorithm~\ref{alg:2d-wire-alg}.
\begin{theorem}
\label{thm:2d-wire-codes}
    Given an LDPC code $\oldcodespace$, with parameters $$[[n_\oldcode,k_\oldcode,d_\oldcode]] , $$ then the Algorithm \ref{alg:2d-wire-alg} can be used to obtain a subsystem code $\newcodespace$ with the following properties:
    \begin{enumerate}
        \item $\nnew \leq O(\nold^2 )$ 
        \item $k_\newcode = k_\oldcode $
        \item $d_\newcode \geq\Omega(d_\oldcode)$
        \item $\newcodespace$ is local in 2D
        \item The weight and degree of $\newcodespace$ is at most three
    \end{enumerate}
\end{theorem}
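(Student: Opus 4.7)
The plan is to verify the five properties in order of increasing difficulty, with the distance lower bound being the main technical content. Properties 1, 4, and 5 follow by direct inspection of Algorithms \ref{alg:degree-reduction} and \ref{alg:check-embedding}; properties 2 and 3 require showing that the gauge group $\gauge$ so defined faithfully represents the logical structure of $\oldcodespace$.

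For property 1, I would sum the contributions of the two phases: DegreeReduce deposits $O(\delta)$ ancilla qubits per data qubit, occupying a 2D strip of width $O(\delta \nold)$, while EmbedCheck places each of the $O(\nold)$ checks on its own horizontal row with a $Z$-branch of length $O(\delta \nold)$. Totalling gives $\nnew = O(\delta \nold^2) = O(\nold^2)$ for LDPC input. Properties 4 and 5 are immediate: every gauge generator is one of $P_q Z_{r_1}$, $Z_{r_i} Z_{r_{i+1}}$, or $X_r$, each of weight at most two; the explicit coordinate assignment guarantees that the endpoints are geometrically adjacent; and no qubit participates in more than three non-single-site generators.

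For property 2, I would introduce the ``undress'' map $\pi : \mathcal{N}(\mathcal{S}_\newcode) \to \mathcal{P}_\nold$ defined by restriction to the data register, and show that it descends to an isomorphism $\mathcal{N}(\mathcal{S}_\newcode)/\gauge \cong \mathcal{N}(\stabs)/\stabs$ on logical classes. The key inputs are (a) the algebraic identity, visible from the $P$-branch definition together with the construction of the $Z$-branch of each input check, that the product of all branch generators realising a lifted check $s$ equals $s \otimes \idty_\anc$, so that every original stabilizer lifts to a stabilizer of $\newcodespace$; and (b) the single-site $X_r$ gauge checks, which imply that any operator supported purely on the ancilla register is gauge-trivial. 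Together these yield surjectivity of $\pi$ on logical classes and $\ker(\pi) \subseteq \gauge$, giving $k_\newcode = k_\oldcode$.

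The distance lower bound (property 3) is the main obstacle. Given a minimum-weight dressed logical $\bar L \in \mathcal{N}(\mathcal{S}_\newcode) \setminus \gauge$, the plan is to extract from $\bar L$ a non-trivial logical of $\oldcodespace$ whose weight is at most $\omega \, |\bar L|$, yielding $|\bar L| \geq d_\oldcode / \omega = \Omega(d_\oldcode)$ under the LDPC assumption. Concretely, I would (i) multiply $\bar L$ by $Z$-branch gauge generators associated with each input check it touches, thereby pushing the ancilla-supported part of $\bar L$ onto the data register one branch at a time, at a multiplicative cost of at most $\omega$ per input check; (ii) use the single-site $X_r$ checks to eliminate any purely ancillary residual so that $\pi(\bar L)$ is well-defined and lies in $\mathcal{N}(\stabs)$; and (iii) argue that $\pi(\bar L)$ is non-trivial in $\mathcal{N}(\stabs)/\stabs$, since otherwise the inverse-lift constructed for property 2 together with the gauge identification would force $\bar L \in \gauge$, contradicting our assumption. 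The delicate step is (i), where the $Y$-branches require careful handling because they route behind the $X$-branches (see Figure \ref{fig:example4}); I would split the cleaning into $X$- and $Z$-flavoured passes and verify the $\omega$ bound by a local accounting around each branch endpoint.
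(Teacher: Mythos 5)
Your proposal is correct and follows essentially the same route as the paper: the qubit count, locality, and weight/degree bounds are read off the construction, while your ``undress'' map restricted to the data register together with the single-site $X$ argument is exactly the content of Lemma~\ref{lem:bare} and Corollary~\ref{cor:k}, and your branch-by-branch cleaning with the factor-$\omega$ accounting (a partial branch product either completes to a lifted stabilizer or leaves residual weight on the ancillas) is the argument of Lemma~\ref{lem:d}. The only minor imprecision is the claim that operators supported purely on the ancilla register are gauge-trivial --- this holds only for \emph{normalizer} elements, which are forced to be purely $X$-type there --- but that is how you use it, so the proof goes through.
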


\begin{proof}
The number of qubits and the locality of $\newcodespace$ are by construction from Algorithm \ref{alg:2d-wire-alg}. The values of $k$ and $d$ follow from Corollary \ref{cor:k}, and Lemma \ref{lem:d}.
\end{proof}

\subsection{Wire codes in three or more dimensions}
We now move on to describe how the wire code construction works in 3D. 
In the previous section, the data qubits were laid out in 1D on the bottom row and the remaining dimension was used to connect them to the checks which were each assigned to a different row. 
For the 3D case, we follow a similar blueprint where the data qubits are laid out in 2D on a $\sqrt{n} \times \sqrt{n}$ square, while a cube of height $\sqrt{n}$ is used to accommodate connections from the qubits, which are placed on the bottom layer, to the checks, which are placed on the top layer. 

Despite these similarities, the 3D case presents a notable obstacle. From a quick calculation it can be shown that if the dimension $k$ and distance $d$ of the output code match that of the input code, then the height of the cube \emph{has} to be at most $O(\sqrt{n})$, if we are to saturate the bounds from Ref.~\cite{bravyi2011subsystem}. 
This leaves us with only $O(\sqrt{n})$ layers to work with, which implies that unlike the 2D case we have to ``pack in'' more than one check per layer.

This issue can be summarised as ``how best to `condense' long range connections between qubits and checks into a finite volume of space". There exists a very natural -- though non-trivial -- solution to this problem. Given a cube of size $\sqrt{n} \times \sqrt{n} \times \sqrt{n}$ it is possible to find edge-disjoint paths from any set of $l \leq c \cdot n$ qubits at the bottom layer to any set of $l$ qubits at the top layer, with $c$ some constant strictly smaller than $1$ \cite{thompson1977sorting}. 
For our purposes, each of these paths can be understood as a coupling -- from a data qubit to a check it is involved with in $\oldcodespace$ -- implemented by a branch of ancilla qubits along that path, together with some gauge checks in $\newcodespace$. 
In most cases the total number of qubit-check pairs induced by $\oldcodespace$ is larger than any $l$ we can choose. 
Given the guarantee that $\oldcodespace$ is sparse, i.e. $\omega,  \delta \in O(1)$, we have that the number of qubit-check pairs is bounded by $\delta \cdot n = O(n)$. 
Thus, by repeating the process of finding edge-disjoint paths for sets of appropriately chosen qubits and checks, we manage to find paths for every qubit-check we need. 
As the process is repeated $O(\frac{\delta n}{c \cdot n}) = O(1)$ times, each edge is used at most $O(1)$ times and we preserve a constant density of qubits. 

In order to apply the intuition described above, we state several lemmas before describing the general $D$-dimensional case formally. We start with the statement that we can find many edge-disjoint paths for a sufficiently thick hypercube in $D$ dimensions.
\begin{lemma}[\cite{thompson1977sorting}] 
\label{lem:Ddimrouting}Let $D\geq 2$ be an integer, and $c_D$ a universal constant. Consider the hypercube $[m]^{D-1}\times [c_Dm]$ living in $D$ Euclidean dimensions. Then for any set of pairs $\{(q_i, s_i)\}_{i=1}^{i=l}$ where $q_i$ are $l$ distinct elements of $[m]^{D-1}\times \{0\}$ and $s_i$ are $l$ distinct elements of $[m]^{D-1}\times \{c_Dm\}$ there exist $l$ edge-disjoint paths in $[m]^{D-1}\times [c_Dm]$ from each $q_i$ to its associated $s_i$.
\end{lemma}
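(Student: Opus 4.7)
The plan is to induct on the dimension $D$, following Thompson's mesh-routing strategy. The base case $D=2$ reduces to the standard many-to-many routing problem in a 2D grid: I would realize each path in three segments, namely vertical from the bottom source to a dedicated turn row, horizontal to the column of the target, and vertical to the top. By assigning turn rows in a staggered order, handling left-bound and right-bound packets separately, both horizontal and vertical edge collisions are avoided, while the total number of rows needed stays linear in $m$. This sets the constant $c_2$.

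For the inductive step, given the hypercube $[m]^{D-1}\times[c_D m]$, I would partition the height into a bottom slab, a middle ``trunk'', and a top slab. Within the bottom slab I would realize a first permutation that lays the $l$ packets into a balanced intermediate configuration; the packets then travel straight along the last coordinate axis through the trunk; finally a second permutation inside the top slab delivers them to their targets. To implement each slab permutation recursively, I would fix one of the lateral axes and view the slab as a stack of $m$ parallel $(D-1)$-dimensional hypercubes, inside each of which the inductive hypothesis applies. Summing the slab and trunk heights yields a recursion $c_D = 2 c_{D-1} + O(1)$, which closes to a universal constant depending only on $D$.

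The main technical obstacle, and the step I would spend the most care on, is choosing the intermediate configuration so that (a) each trunk column is used by at most one packet, and (b) each lower-dimensional slab sub-problem has source and target loads respecting the $(D-1)$-dimensional routing capacity guaranteed by induction. I would resolve this with a column-balancing argument: the assignment of packets to trunk columns amounts to a perfect matching in a bounded-degree bipartite graph, whose existence follows from Hall's theorem (or equivalently from an edge-coloring of a bipartite multigraph of bounded maximum degree). Propagating the load bounds through the recursion both justifies applying the inductive hypothesis at each step and fixes the value of $c_D$.
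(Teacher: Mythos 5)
First, a point of reference: the paper does not prove this lemma at all --- it is imported verbatim from Ref.~\cite{thompson1977sorting} --- so your proposal is being measured against the literature rather than against an argument in the text. Your overall architecture (a 2D base case, plus a slab/trunk/slab recursion in which the intermediate configuration is fixed by a bipartite matching or edge-colouring argument) is the standard Clos/Bene\v{s} rearrangeability strategy and has the right shape. The genuine gap is in the base case. In the ``up to a private turn row, across, up'' scheme, a vertical collision occurs in any column $c$ that is simultaneously the source column of a packet $a$ (which occupies the edges of $c$ from row $0$ up to its turn row $t_a$) and the target column of a different packet $d$ (which occupies the edges of $c$ from row $t_d$ to the top); edge-disjointness forces $t_a \le t_d$. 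These precedence constraints form a digraph with in- and out-degree at most one, hence a union of directed paths and directed cycles, and a directed cycle --- already present for two packets that exchange columns, $a \to b$ and $b \to a$ --- admits no consistent assignment of turn rows (forcing $t_a = t_d$ collapses the two horizontal segments onto the same row and the same edges). Separating left-bound from right-bound packets does not repair this: every precedence cycle necessarily mixes both kinds, and the cross-class constraints point in both directions, so no block ordering of the two classes satisfies them. As a sanity check that no scheme in which paths only run parallel within shared columns can work: two edge-disjoint paths joining crossing terminal pairs on the outer face of a planar graph must cross transversally at a vertex of degree at least four, so the construction must exploit the interior of the grid in a way your three-segment paths never do.

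The standard repair --- and, as far as we can tell, what Thompson's construction amounts to --- is to route by unrolling a nearest-neighbour sorting network (odd--even transposition sort) into the extra coordinate: allot $O(1)$ rows per comparator round, let each packet's path follow its wire through the network, and send the packets straight up once they are sorted by destination. This handles arbitrary (partial) permutations, including the cycles that defeat the turn-row scheme, and costs height $O(m)$, which fixes $c_2$. Even here the comparator gadget needs care: a single adjacent swap cannot be realized by two edge-disjoint monotone paths inside a $2\times O(1)$ window (the four vertical edges are exactly consumed and the single useful horizontal edge is forced onto both paths), so one of the two packets must detour through a neighbouring column --- which is available precisely because odd--even transposition leaves alternating horizontal edges idle in each round. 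Your inductive step and the Hall's-theorem column-balancing are the right tools once a correct base case is in place; alternatively the recursion can be bypassed entirely by running a $(D-1)$-dimensional mesh sorting algorithm with $O(m)$ rounds and unrolling it into the last coordinate in one shot.
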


\begin{figure}[H]
    \centering
    \includegraphics[page=41, scale=.7]{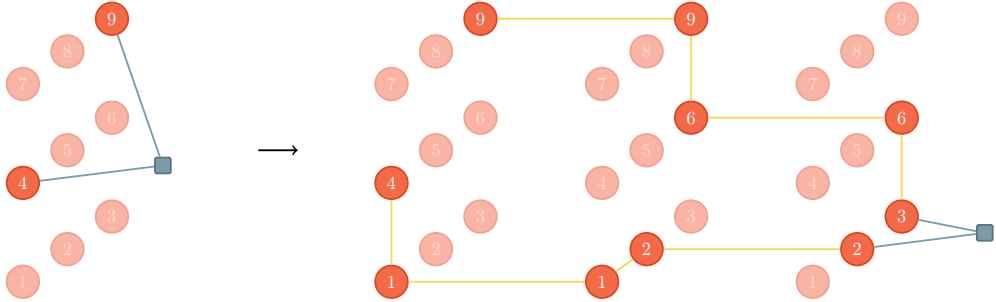}
    \caption{Illustration of how a long-range check in $D$ dimensions can be made local via routing in $D+1$ dimensions using Lemma~\ref{lem:Ddimrouting}.}
\label{fig:paths}
\end{figure}

We move on to ensure that the path-finding process only needs to be repeated a constant number of times, ensuring a constant qubit density in the resulting wire code.
\begin{lemma}
\label{lem:colourclasses}Given a code $\oldcodespace$ with maximum weight $\omega$ and maximum degree $\delta$, then the set of pairs $\{(q,s)\}_{q,s}$ induced by $\oldcodespace$ such that the qubit $q$ is in the support of the check $s$ can be partitioned into at most $\omega \cdot \delta + 1$ sets $\{\chi_j\}_j$ such that no two pairs in a given $\chi_j$ share a qubit of a check. 
\end{lemma}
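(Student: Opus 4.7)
The plan is to recast the partitioning problem as a graph coloring problem. I would define a \emph{conflict graph} $H$ whose vertex set consists of all incidence pairs $\{(q,s) : q \in \mathrm{supp}(s)\}$ and in which two pairs are adjacent precisely when they share a qubit or share a check. A proper vertex coloring of $H$ directly yields a partition of the pairs with the required distinctness property, where each color class serves as one $\chi_j$.

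The key step is bounding the maximum degree $\Delta(H)$. For a fixed vertex $(q,s)$ of $H$, any conflicting pair either shares the qubit $q$ or the check $s$. The number of other pairs sharing $q$ is at most $\delta - 1$, since $q$ lies in the support of at most $\delta$ checks of $\oldcodespace$. The number of other pairs sharing $s$ is at most $\omega - 1$, since $s$ has weight at most $\omega$. Hence $\Delta(H) \leq \omega + \delta - 2$, and a standard greedy argument produces a proper coloring of $H$ with at most $\Delta(H) + 1 \leq \omega + \delta - 1$ colors. Since $\omega + \delta - 1 \leq \omega \cdot \delta + 1$ for all $\omega, \delta \geq 1$, this certifies the bound claimed in the lemma. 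In fact, viewing the incidence pairs as edges of the bipartite Tanner graph, one could invoke K\"{o}nig's edge coloring theorem to obtain the even tighter bound of $\max(\omega, \delta)$ color classes, but the loose bound in the statement is all that is needed downstream.

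I do not expect a substantive obstacle here: once the conflict graph is set up, the argument is routine bookkeeping. The only point to verify is that the color classes $\chi_j$ produced this way are precisely the objects required by Lemma~\ref{lem:Ddimrouting} in the $D$-dimensional construction, namely that within a single $\chi_j$ the qubits $\{q_i\}$ are distinct and the checks $\{s_i\}$ are distinct, so that they can serve as endpoints of a set of edge-disjoint paths. This follows immediately from the definition of adjacency in $H$, and the constancy of the number of color classes (independent of $n$, for LDPC codes) is what guarantees the constant qubit density in the subsequent construction.
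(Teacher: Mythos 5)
Your proof is correct and follows essentially the same route as the paper: build a conflict graph on the incidence pairs, with adjacency when two pairs share a qubit or a check, and greedily color it. Your degree bound $\omega+\delta-2$ is in fact tighter than the paper's $\omega\cdot\delta$ (and your K\"{o}nig's-theorem remark tightens it further to $\max(\omega,\delta)$ classes), but both comfortably certify the stated bound of $\omega\cdot\delta+1$.
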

\begin{proof}
    Define a graph $G = (V,E)$ that contains a vertex for every pair $(q,s)$ when $q$ is in the support of $s$, and two vertices $(q,s)$ and $(q', s')$ share an edge whenever $q = q'$ or $s = s'$. 
    Because each qubit is connected to at most $\omega$ checks and each check is connected to at most $\delta$ qubits, the degree of the graph is at most $\omega \cdot  \delta$ and thus $V$ can be partitioned into $(\omega \cdot  \delta+1)$ disjoint sets $V = \bigsqcup_j \chi_j$ such that no two vertices in a given $\chi_j$ share an edge, from a standard greedy graph colouring argument.
\end{proof}

Finally, reducing the degree of the data qubits can be done in a way similar to the 2D case. 
Indeed, for any target dimension $D \geq 3$ it is possible to embed the Pauli branches into a disc of dimension $(D-1)$ and of radius $O(\delta)$.
We go on to describe the general $D$-dimensional construction in Algorithm~\ref{alg:high-d-wire-codes-alg} below.

\begin{algorithm}[H]
\caption{$D$-dimensional Wire code construction}
\label{alg:high-d-wire-codes-alg}
\begin{algorithmic}
    \Require An LDPC code $\mathcal{C}$, with parameters $[[n,k,d]]$, $n_C$ stabilizer checks, 
    maximum check weight $\omega$, and maximum degree $ \delta$.
    \Ensure A subsystem code that is local in $D$-dimensions with parameters $[[\Theta(n\cdot c_D n^{1/(D-1)}),k,O(\frac{1}{w}d)]]$.
    \ForEach {qubit $q \in [1, \dots, n]$}
    \State Place $X$-, $Y$-, and $Z$-branches within a $D-1$-dimensional disc at an unoccupied lattice site of $[m]^{D-1}\times \{0\}$, with $m = \lceil n^{1/(D-1)}\rceil$
    \EndFor 
    \ForEach {check $s \in [1, \dots, n_C]$ } 
    \State Label a lattice site of $[m]^{D-1}\times \{c_D m\}$ corresponding to $s$. 
    \EndFor 
    \ForEach {$\chi_j$ given by Lemma \ref{lem:colourclasses}}
        \ForEach { $(q, s) \in \chi_j $}
            \State Lemma \ref{lem:Ddimrouting} guarantees the existence of a path $(a_1, a_2, \dots , s)$ from a $P$-branch of $q$ to the site labelled~$s$.
            \State Place a qubit at the location of every $a_t$
            \State Add $X_{a_t}$ to the gauge checks for every $a_t$
            \State Add $Z_{a_t}Z_{a_{t+1}}$ to the gauge checks for every pair $a_t a_{t+1}$            
        \EndFor 
        \State \multiline{Note that this $\textbf{for}$ loop might add qubits at the same coordinates as another iteration from a different class $\chi_{j'}$. In that case the qubits are simply stacked onto each other, and the gauge checks added by the current $\textbf{for}$ loop only apply to the qubits it introduced. Since the number of classes $\chi_j$ is upper bounded by $\omega \cdot \delta = O(1)$, the number of qubits can be bounded by a constant.}
    \EndFor
    \ForEach {check $s \in [1, \dots, n_C]$}
        \State \multiline{Consider the set $\{q_i\}_{i=1}^{i=\omega_s}$ of qubits in the support of $s$. In the previous loop, each of these induced a path from a $P$-branch associated to $q_i$ to $s$. Consider the final ancillary qubit that was added, $a_{T_i}$. Then we add a $Z$-branch through $(a_{T_1},a_{T_2} \dots a_{T_{\omega_s}})$.}
    \EndFor 
\end{algorithmic}
\end{algorithm}

We now state Theorem~\ref{thm:wire-codes} which establishes the properties of the $D$-dimensional wire codes produced by Algorithms~\ref{alg:2d-wire-alg} and~\ref{alg:high-d-wire-codes-alg}. 
\begin{theorem}
\label{thm:wire-codes}
    Given an LDPC code $\oldcodespace$, with parameters $$[[n_\oldcode,k_\oldcode,d_\oldcode]] , $$ and a target dimension $D \geq 3$ then the Algorithms  \ref{alg:high-d-wire-codes-alg} can be used to obtain a subsystem code $\newcodespace$ with the following properties:
    \begin{enumerate}
        \item $\nnew \leq O( c_D \nold^{D/(D-1)} )$ for some constant $c_D$
        \item $k_\newcode = k_\oldcode $
        \item $d_\newcode \geq\Omega(d_\oldcode)$
        \item $\newcodespace$ is local in $D$ dimensions
        \item The weight and degree of $\newcodespace$ is three
    \end{enumerate}
\end{theorem}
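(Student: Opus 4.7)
The plan is to mirror the proof of Theorem~\ref{thm:2d-wire-codes} by splitting the five claims into an algebraic group (items~2 and~3) and a geometric/combinatorial group (items~1, 4, 5). Algorithm~\ref{alg:high-d-wire-codes-alg} produces exactly the same family of Pauli branches of Definition~\ref{def:branch} as the 2D construction, only routed through a higher-dimensional ambient volume. Since Corollary~\ref{cor:k} and Lemma~\ref{lem:d} only use the algebraic structure of these branches, not their embedding, they should apply verbatim to give $k_\newcode=k_\oldcode$ and $d_\newcode\geq\Omega(\tfrac{1}{\omega}d_\oldcode)$, handling items~2 and~3 without new work.

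For item~1 I would observe that the ambient hypercube $[m]^{D-1}\times[c_D m]$ with $m=\lceil \nold^{1/(D-1)}\rceil$ has volume $O(c_D \nold^{D/(D-1)})$. By Lemma~\ref{lem:colourclasses}, the $(q,s)$ pairs induced by $\oldcodespace$ are partitioned into at most $\omega\delta+1$ color classes, and inside each class Lemma~\ref{lem:Ddimrouting} routes the pairs along edge-disjoint paths. Hence each lattice edge is traversed at most $O(\omega\delta)=O(1)$ times, the stacking at any site is $O(1)$, and summing over edges bounds the total ancilla count by a constant times the hypercube volume, which is the claimed $O(c_D \nold^{D/(D-1)})$. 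Item~4 is then immediate since every introduced check acts on lattice sites that are either identical or adjacent along a routing path or within a $P$-branch disc.

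For item~5, every check produced by the algorithm is either a single-site $X_{a_t}$, a two-body $Z_{a_t}Z_{a_{t+1}}$ along an ancilla chain, or a two-body $P_q Z_{a_1}$ attaching a branch to its data qubit, so the weight bound holds trivially. The principal obstacle is the degree bound: when two color classes route paths that share a cube edge or terminate at the same $P$-branch copy qubit, one must argue that the stacked qubits behave as independent ancillas so that no physical qubit exceeds degree three. I would handle this by adopting the same selection rule as Algorithm~\ref{alg:check-embedding}, namely that each copy qubit accepts at most one incoming path, and by enforcing the stacking convention already stated in Algorithm~\ref{alg:high-d-wire-codes-alg}: gauge checks added in one color-class iteration act only on the qubits introduced in that iteration. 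Under this convention each ancilla has exactly two chain neighbors and one single-site $X$ check for a total degree of three, and each copy qubit of a branch inherits the same degree bound as in the 2D proof, completing the theorem.
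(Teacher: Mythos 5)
Your proposal is correct and follows essentially the same route as the paper, which likewise disposes of items 1, 4, and 5 by construction (via Lemmas~\ref{lem:Ddimrouting} and~\ref{lem:colourclasses} and the stacking convention) and items 2 and 3 via Corollary~\ref{cor:k} and Lemma~\ref{lem:d}. One small correction to your item-5 discussion: the construction necessarily contains \emph{three}-body checks (e.g.\ the $ZZZ$ junction checks where a qubit's wire meets a check's branch, and the $Z_{i-0.5}P_iZ_{i+0.5}$ checks of the weight-reduction step of Section~\ref{sec:WeightReduction}) --- if every multi-qubit check were two-body, the product of the gauge checks associated to a weight-$\omega$ stabilizer with $\omega\geq 3$ could not reduce to $s\otimes\idty_{\anc}$, which is what Lemma~\ref{lem:bare} and hence items 2 and 3 rely on; the weight bound of three still holds, of course.
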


\begin{proof}
The number of qubits and the locality of $\newcodespace$ are by construction from Algorithm \ref{alg:high-d-wire-codes-alg}. The values of $k$ and $d$ follow from Corollary \ref{cor:k}, and Lemma \ref{lem:d}.
\end{proof}

\begin{corollary}
    For any $D\geq 2$, there exists an infinite sequence $\{\mathcal{C}_n\}_n$ of subsystem codes with parameters $[[n,k,d]]$ such that
    \begin{enumerate}
        \item $k = \Omega(n^{(D-1)/D}) $
        \item $d = \Omega(n^{(D-1)/D}) $
        \item $\{\mathcal{C}_n\}_n$ is local in $D$ dimensions
        \item The weight and degree of $\newcodespace$ is at most three
    \end{enumerate}
\end{corollary}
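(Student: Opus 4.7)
The plan is to instantiate the wire code construction (Theorems~\ref{thm:2d-wire-codes} and~\ref{thm:wire-codes}) with a family of good qLDPC codes and read off the parameters. Concretely, fix $D\geq 2$. For each sufficiently large integer $N$ let $\oldcodespace_N$ be a good qLDPC code with parameters $[[N,\Theta(N),\Theta(N)]]$ and constant check weight $\omega=O(1)$ and qubit degree $\delta=O(1)$; such families are guaranteed to exist by the constructions of Refs.~\cite{Panteleev2022,leverrier2022quantum,Dinur2023}. I would treat the cases $D=2$ and $D\geq 3$ in parallel, invoking Theorem~\ref{thm:2d-wire-codes} in the former and Theorem~\ref{thm:wire-codes} in the latter.

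In both cases the recipe produces a subsystem code $\newcodespace_N$ whose block length is $n=O(c_D N^{D/(D-1)})$, whose number of logical qubits is $k_\newcode=k_\oldcode=\Theta(N)$, and whose distance is $d_\newcode=\Omega(\tfrac{1}{\omega}d_\oldcode)=\Omega(N)$, since $\omega$ is a constant independent of $N$. Rearranging the block-length relation gives $N=\Theta(n^{(D-1)/D})$, and substituting into the expressions for $k_\newcode$ and $d_\newcode$ yields
\begin{equation}
k_\newcode=\Theta\bigl(n^{(D-1)/D}\bigr),\qquad d_\newcode=\Omega\bigl(n^{(D-1)/D}\bigr),
\end{equation}
as required. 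Locality in $D$ Euclidean dimensions and the bound of three on check weight and qubit degree are immediate from the corresponding clauses of Theorems~\ref{thm:2d-wire-codes} and~\ref{thm:wire-codes}.

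To obtain the infinite sequence indexed by $n$ rather than $N$, I would let $N$ range over all sufficiently large integers, define $n(N)$ to be the resulting block length, and then pad each $\newcodespace_{N}$ with $O(1)$ unused ancilla qubits (stabilized by single-qubit $X$ checks, which preserve the three-locality and the $k,d$ bounds) so that the block length hits every sufficiently large integer; this makes $\{\mathcal{C}_n\}_n$ a genuine sequence indexed by $n$.

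The only subtlety is the implicit claim that the constants suppressed inside $\Theta(\cdot)$ and $\Omega(\cdot)$ in Theorems~\ref{thm:2d-wire-codes} and~\ref{thm:wire-codes} do not depend on the input code beyond $\omega,\delta$; since $\omega$ and $\delta$ are $O(1)$ for good qLDPC codes, the asymptotic $1/\omega$ factor in the distance and the $\delta$ factor in the block length collapse into the universal constants $c_D$, so no obstruction arises. I therefore do not expect any genuine hurdle in the proof — it is a direct corollary obtained by chaining the existence of good qLDPC codes with the $D$-dimensional wire code theorem.
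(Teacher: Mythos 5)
Your proposal is correct and follows exactly the paper's route: the paper's proof is the single sentence ``apply Theorem~\ref{thm:wire-codes} to a construction of good quantum LDPC codes,'' and you simply spell out the parameter arithmetic (which checks out) plus the $D=2$ case via Theorem~\ref{thm:2d-wire-codes} and an optional padding argument. No gap and no genuinely different idea.
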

\begin{proof}
    It suffices to apply Theorem~\ref{thm:wire-codes} to a construction of good quantum LDPC codes~\cite{panteleev2020quantum,leverrier2022quantum,Dinur2023}.
\end{proof}

\subsection{Wire codes on expander graphs}
Many architectures have structures that do not reflect the usual Euclidean space. As such our previous constructions might not be readily applicable to those. Fortunately, we can further generalise our construction to arbitrary spaces, under the condition that they are ``densely connected". 

In order to help us formalise how dense of a connection is needed, we introduce some graph theoretic language. A graph $G = (V,E)$ constitutes of a set of vertices $V = [1,\dots, n]$, and a set of edges $E = \{e_i\}_i$ of the form $e = (u,v), u,v \in V$. In our context, we say that an architecture has connectivity graph $G$ when each vertex contains $O(1)$ qubits, and multi-qubit gates can only happen between qubits that are adjacent in $G$. A widespread notion of ``how well connected a graph is" is that of \emph{graph expansion}.

\begin{definition}[Graph expansion]
    A graph $G$ is $\alpha$-expander when
    \[ \min_{S \subset V} \frac{|E(S, \overline{S})|}{\min \left ( |S|, |\overline{S}| \right )} \geq \alpha \]
    The set $E(S, \overline{S})$ represents the edges with endpoints in both $S$ and its complement $\overline{S} = V \setminus S$.
    \label{def:exp}
\end{definition}

When $\alpha$ is large, it is possible to embed a large number of \emph{arbitrary} connections between some of the qubits in $G$.

\begin{lemma}[Theorem 1.2 of \cite{chuzhoy2016routing}\footnote{This theorem was originally stated for $\alpha$-well-linked graphs, but $\alpha$-expander are also $\alpha$-well-linked~\cite{WellLinked}.}]
\label{lem:routing-expanders}
    Let $G$ be an $\alpha$-expander graph on $n$ vertices. Then there exists a set $V' \subset V$ of size $|V'| \geq c_0 \frac{\alpha n}{\text{polylog}(n)}$, for some universal constant $c_0$, such that for any set of pairs $\{(q_i, s_i)\}_i$ where $q_i, s_i \in V'$ are all distinct, there exists a path from $q_i$ to $s_i$ for every $i$, and every edge belongs to at most $15$ paths.
\end{lemma}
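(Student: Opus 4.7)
The plan is to obtain the statement by combining two standard ingredients: a well-linkedness property of expanders, and an iterative low-congestion routing scheme based on multi-commodity flow. First I would establish that an $\alpha$-expander $G$ on $n$ vertices contains a large subset $V' \subset V$ of size $\Omega(\alpha n / \text{polylog}(n))$ on which $G$ is $\Omega(\alpha)$-well-linked, i.e.\ for any bipartition $V' = A \sqcup B$ with $|A|\leq|B|$, the minimum $A$--$B$ cut in $G$ has size $\Omega(\alpha |A|)$. This step is the ``expander implies well-linked'' reduction used in the footnote citation, and it proceeds by peeling off small sparse cuts inside $G$ until what remains satisfies a uniform well-linkedness bound.

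Next, given the terminal set $V'$ and arbitrary distinct pairs $\{(q_i, s_i)\}$ with $q_i, s_i \in V'$, I would produce a fractional multi-commodity flow that sends one unit between each pair with edge congestion $O(\text{polylog}(n))$. Existence of such a flow is an LP-duality consequence of well-linkedness via the standard sparsest-cut/flow gap (Leighton--Rao type bounds), so this step is essentially mechanical once well-linkedness is in hand.

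The main work is rounding the fractional flow to integral paths with congestion bounded by a universal constant (here $15$). The plan is the iterative scheme of Chuzhoy: in each phase, pick a $1/\text{polylog}(n)$-fraction of the remaining demands, and use a Racke-style hierarchical decomposition (or a cut-matching subroutine) to realise them by \emph{integral} paths of congestion $O(1)$; then delete the routed demands and repeat. Since the deletion only reduces the available routing capacity by a negligible amount and well-linkedness is preserved on the residual graph, the process terminates after $O(\text{polylog}(n))$ phases, and the total congestion on any edge telescopes to a universal constant.

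The hard part will be the last step, because naive flow rounding (e.g.\ randomized path rounding) only yields $O(\log n / \log\log n)$ congestion, not $O(1)$. Bringing the congestion down to a constant is the central technical contribution of \cite{chuzhoy2016routing}, and a self-contained argument would essentially reproduce her cut-matching-based recursion; in the present paper it is legitimate to invoke the theorem as a black box, with the understanding that if one were willing to weaken the congestion to $\text{polylog}(n)$, a significantly shorter proof based on Raecke decompositions alone would suffice and would only cost an additional polylogarithmic factor in the final wire-code parameters.
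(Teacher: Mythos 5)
The paper offers no proof of this lemma: it is imported verbatim as Theorem~1.2 of Chuzhoy's routing paper, with the footnoted observation that an $\alpha$-expander is automatically $\alpha$-well-linked, which is exactly the black-box invocation you arrive at in your final paragraph. Your sketch of the internal argument (well-linkedness, fractional multi-commodity flow, constant-congestion rounding via the cut-matching recursion) is a fair outline of that external proof, though your first step is simpler than you make it --- the entire vertex set of an $\alpha$-expander is already $\alpha$-well-linked by definition of edge expansion, so no peeling of sparse cuts is needed before handing the terminal set to the cited theorem.
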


\begin{algorithm}[H]
\caption{Wire code construction on expanders}
\label{alg:expander-wire-codes-alg}
\begin{algorithmic}
    \Require An $\alpha$-expander $G=(V,E)$ on $n$ vertices. And a quantum LDPC code $\oldcodespace$ on $n_\oldcode$ qubits, and with $m_\oldcode$ checks such that $n_\oldcode + m_\oldcode \leq c_0 \frac{\alpha n}{\text{polylog}(n)}$.
    \Ensure A subsystem code $\newcodespace$ local on $G$ with parameters $[[O(n), k_\oldcode, d_\oldcode/\omega]]$.
    \State Since $n_\oldcode + m_\oldcode \leq c_0 \frac{\alpha n}{\text{polylog}(n)}$, Lemma \ref{lem:routing-expanders} ensures that to every qubit $q \in [n_\oldcode]$, and stabilizer $s \in [m_\oldcode]$ we can assign a different vertex $\eta(q),\eta(s) \in V'$. 
    \ForEach {qubit $q \in [1, \dots, n_\oldcode]$}
        \State Place a qubit at $\eta(q)$. We refer to this qubit as $q_\eta$.
    \EndFor 
    \ForEach{ $\chi$ from Lemma \ref{lem:colourclasses}}
        \ForEach {pair $(q,s)$ in $\chi$}
                \State Let $P$ be the Pauli operator in $s$ acting on $q$.
                \State Lemma \ref{lem:routing-expanders} guarantees the existence of a path from $\eta(q)$ to $\eta(s)$.
                \State Add a $P$-branch along the path from $q_\eta$ to $\eta(s)$, placing a qubit at every vertex of the path.
        \EndFor
    \EndFor 
    \ForEach{check $s \in [1,\dots, m_\oldcode]$}
        \State Let $r_1, \dots r_{\omega_s}$ be the qubits located at $\eta(s)$ induced by the respective $P$-paths. 
        \State Add $Z_{r_1}\dots Z_{r_{\omega_s}}$ to the checks of $\newcodespace$.
    \EndFor
\end{algorithmic}
\end{algorithm}

In Algorithm \ref{alg:expander-wire-codes-alg}, for the sake of clarity, we have made no attempt to reduce the weight and the degree of the output code to three. However, this can be remedied by applying the method described in Section \ref{sec:WeightReduction}.

\begin{theorem}
\label{thm:expander-wire-codes}
    Let $G$ be an $\alpha$-expander graph of degree $O(1)$ on $n$ vertices. There is a universal constant $c_0$ such that any quantum LDPC code $\oldcodespace$ with parameters $[[n_\oldcode,k_\oldcode,d_\oldcode]]$ and maximum weight $\omega$ yields a new subsystem code $\newcodespace$ such that 

    \begin{enumerate}
        \item $\nnew \leq O(n)$ 
        \item $k_\newcode = k_\oldcode$
        \item $d_\newcode \geq {d_\oldcode}/{\omega}$
        \item $\newcodespace$ is local on $G$ 
    \end{enumerate}

    as long as $n_\oldcode + m_\oldcode \leq c_0 \frac{\alpha n}{\text{polylog}(n)}$, where $m_\oldcode$ is the number of checks of $\oldcodespace$.
\end{theorem}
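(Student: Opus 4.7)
My plan is to verify the four asserted properties of $\newcodespace$ by reading off the behaviour of Algorithm~\ref{alg:expander-wire-codes-alg} under the hypothesis $n_\oldcode + m_\oldcode \leq c_0 \alpha n/\text{polylog}(n)$. That hypothesis is precisely what is needed to invoke Lemma~\ref{lem:routing-expanders}, which produces a set $V' \subset V$ large enough to accommodate an injection $\eta$ sending every original qubit and every check of $\oldcodespace$ to a distinct vertex of $G$, and which supplies bounded-congestion paths between any prescribed endpoints in $V'$.

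First I would handle the qubit count and locality. Lemma~\ref{lem:colourclasses} partitions the qubit–check incidences of $\oldcodespace$ into $O(\omega\delta) = O(1)$ colour classes, within each of which all endpoints $\eta(q)$ and $\eta(s)$ are distinct. Applying Lemma~\ref{lem:routing-expanders} once per colour class yields routed paths with per-edge congestion at most $15$ inside a class, and hence $O(1)$ overall. Since $G$ has constant degree and therefore $|E| = O(n)$, the total number of edge usages across all paths, and hence the total number of ancilla qubits placed by the branches, is $O(n)$, giving property 1. The same congestion bound implies that each vertex of $G$ carries only $O(1)$ stacked qubits; combined with the fact that every gauge generator introduced by the algorithm (single-site $X$s, $Z$-$Z$ bonds along a branch, and the terminal $Z$-branch joining the tips at $\eta(s)$) lives on a single vertex or on a single edge of $G$, this yields locality, i.e.~property~4.

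Next I would invoke Corollary~\ref{cor:k} and Lemma~\ref{lem:d} to obtain properties 2 and 3. The hypotheses of those results are exactly the wire-code structure assembled by Algorithm~\ref{alg:expander-wire-codes-alg}: for each original check $s$ with support $(q_1,\ldots,q_{\omega_s})$ and Pauli types $P_i$, the algorithm installs a $P_i$-branch from $q_\eta$ to $\eta(s)$ and then closes these with a $Z$-branch across the tips $r_1,\ldots,r_{\omega_s}$. Multiplying all gauge generators along these branches reproduces $s$ on the data register tensored with identity on the ancillas, which is the structural ingredient that Corollary~\ref{cor:k} converts into $k_\newcode = k_\oldcode$. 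The single-site $X$ gauge checks on every ancilla, together with the linear chain structure of the branches, are what Lemma~\ref{lem:d} uses to prevent any short dressed logical from being created by cleaning onto the ancilla subsystem, giving the distance bound $d_\newcode \geq d_\oldcode/\omega$.

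The main obstacle is the qubit-density estimate in the first step: one has to combine Lemma~\ref{lem:routing-expanders} applied class-by-class with Lemma~\ref{lem:colourclasses} and with the constant degree of $G$ to guarantee that even after all colour classes have been routed, the stack at every vertex remains $O(1)$. Once this bound is secured, the remaining properties transplant without change from the analogous argument used for the Euclidean construction in the proof of Theorem~\ref{thm:wire-codes}, since both proofs rely on exactly the same branch-level structure of the wire code.
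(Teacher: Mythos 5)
Your proposal is correct and follows essentially the same route as the paper: the paper's proof likewise reads the qubit count and locality directly off Algorithm~\ref{alg:expander-wire-codes-alg} (via Lemmas~\ref{lem:routing-expanders} and~\ref{lem:colourclasses}) and obtains $k_\newcode$ and $d_\newcode$ from Corollary~\ref{cor:k} and Lemma~\ref{lem:d}. You have simply made explicit the congestion and colour-class bookkeeping that the paper leaves as ``by construction.''
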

\begin{proof}
    The number of qubits and the locality of $\newcodespace$ are by construction from Algorithm \ref{alg:expander-wire-codes-alg}. The values of $k$ and $d$ follow from Corollary \ref{cor:k}, and Lemma \ref{lem:d}.
\end{proof}

\begin{corollary}
    \label{cor:expander-wire-codes}
    Given an infinite family $\{G_n\}_n$ of $\alpha$-expander graphs with degree $O(1)$, Algorithm \ref{alg:expander-wire-codes-alg} yields a family $\{\mathcal{C}_n\}_n$ of subsystem codes local on $G_n$ with parameters $[[n, \Omega\left({\alpha n}/{\text{polylog}(n)}\right), \Omega\left({\alpha n}/{\text{polylog}(n)}\right)]]$.
\end{corollary}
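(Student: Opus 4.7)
The plan is to instantiate Theorem~\ref{thm:expander-wire-codes} using a good quantum LDPC code as the input code $\oldcodespace$, and to choose its block length just small enough that the feasibility condition $n_\oldcode+m_\oldcode\le c_0\,\alpha n/\text{polylog}(n)$ is satisfied. Concretely, let $\{\mathcal{Q}_N\}_N$ be a family of good qLDPC codes from Refs.~\cite{Panteleev2022,leverrier2022quantum,Dinur2023} with parameters $[[N,\Theta(N),\Theta(N)]]$ and with $O(1)$ maximum check weight $\omega$ and $O(1)$ maximum qubit degree. In particular, the number of checks satisfies $m=O(N)$, so $N+m=\Theta(N)$.

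Next I would choose, for each $G_n$ in the family of $\alpha$-expanders on $n$ vertices, an input code $\oldcodespace=\mathcal{Q}_{N(n)}$ with $N(n)=\lfloor c_0\alpha n/(2\,\text{polylog}(n))\rfloor$, so that $N(n)+m(n)\le c_0\alpha n/\text{polylog}(n)$ as required by Theorem~\ref{thm:expander-wire-codes}. Because the family is good, $k_\oldcode=\Theta(N(n))=\Omega(\alpha n/\text{polylog}(n))$ and $d_\oldcode=\Theta(N(n))=\Omega(\alpha n/\text{polylog}(n))$.

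Applying Theorem~\ref{thm:expander-wire-codes} then yields a subsystem code $\mathcal{C}_n$ that is local on $G_n$ with
\begin{align*}
n_\newcode&\le O(n),\\
k_\newcode&=k_\oldcode=\Omega\!\left(\alpha n/\text{polylog}(n)\right),\\
d_\newcode&\ge d_\oldcode/\omega=\Omega\!\left(\alpha n/\text{polylog}(n)\right),
\end{align*}
where in the last line I use that $\omega=O(1)$. Padding the output with at most $O(n)$ unused qubits (if $n_\newcode<n$) to reach exactly $n$ physical qubits gives a code on $n$ qubits with the claimed parameters, which completes the construction.

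The only place that requires any care is the accounting of the feasibility condition: one must ensure that a family of good qLDPC codes with the desired block length $N(n)$ exists for infinitely many $n$. This is immediate from the fact that the good qLDPC code families of Refs.~\cite{Panteleev2022,leverrier2022quantum,Dinur2023} contain codes of arbitrarily large block length and that one can always restrict attention to the subfamily whose block lengths are closest to $N(n)$; any resulting gap only changes constants hidden in the $\Omega$. Beyond this, the proof is a direct application of Theorem~\ref{thm:expander-wire-codes}, so there is no substantive obstacle.
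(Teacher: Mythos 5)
Your proposal is correct and follows the same route as the paper, which simply applies Theorem~\ref{thm:expander-wire-codes} to the good qLDPC codes of Refs.~\cite{panteleev2020quantum,leverrier2022quantum,Dinur2023}; you merely spell out the choice of block length $N(n)=\Theta(\alpha n/\text{polylog}(n))$ needed to satisfy the feasibility condition, a detail the paper leaves implicit.
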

\begin{proof}
    It suffices to apply Theorem \ref{thm:expander-wire-codes} to the good qLDPC codes from Refs.~\cite{panteleev2020quantum, leverrier2022quantum, Dinur2023}.
\end{proof}

These parameters can be compared with the best existing bounds, see Section \ref{sec:upper-bound}.

\subsection{Wire codes for weight and degree reduction}
\label{sec:WeightReduction}

In this section we describe how wire codes can be applied to reduce the weight and degree of a stabilizer code at the cost of a modest overhead. 

The weight and degree reduction proceeds similarly to the 2D wire code construction above. 
The main difference here is that we do not need to introduce any additonal qubit overhead to realize a local embedding of the Tanner graph in a hypercubic lattice. 

The weight reduction step is performed on each check of the input code individually. 
This step can be understood as performing a trivalent resolution on each check node of the input Tanner graph, similar to Figure~\ref{fig:WeightReduction}. 
Consider a check $P_1P_2\cdots P_\omega$ on a set of qubits that have been ordered $1,2,\dots,\omega$. 
we introduce ancillary qubits labelled $2.5,3.5,\dots,{(\omega-1.5)}$. 
Next, we introduce multi-qubit gauge checks $$\{P_1P_2Z_{2.5},\,Z_{2.5}P_3Z_{3.5},\,\dots,Z_{(\omega-2.5)}P_{(\omega -2)}Z_{(\omega-1.5)},\,Z_{(\omega-1.5)}P_{(\omega-1)}P_{\omega}\}$$ along with single-qubit $X$ gauge checks $\{X_i\,|\,i=2.5,3.5,\dots,(\omega-1.5)\}$. 

The degree reduction step is performed on each qubit of the input code individually. 
This step can be understood as performing a trivalent resolution on the $X$ type edges that connect to a qubit node in the input Tanner graph, and similarly for $Y$, $Z$, see Figure~\ref{fig:degree-reduction-explanation}. 
Consider a qubit that is in the $X$-type support of checks $c^X_1,c^X_2,\dots c^X_{\delta_X}$, $Y$-type support of checks $c^Y_1,c^Y_2,\dots c^Y_{\delta_Y}$, and $Z$-type support of checks $c^Z_1,c^Z_2,\dots c^Z_{\delta_Z}$, where $\delta=\delta_X+\delta_Y+\delta_Z$. 
We introduce $X$-type copy qubits labelled $1,2,\dots,(\delta_X-1)$, 
and similarly for $Y$ and $Z$. 
Next, we introduce multi-qubit gauge checks $\{X_{0}Z_1,Z_{1}Z_{2},\,Z_{2}Z_{3},\dots,Z_{(\delta_X-2)}Z_{(\delta_X-1)}\}$, where $0$ denotes the original qubit being considered. 
We now replace each $c^X_i$ check with a gauge equivalent check $\tilde{c}^X_i=c^X_i X_0Z_i$, which moves the support on qubit 0 to qubit $i$. 
Finally, we introduce single-qubit $X$ gauge checks on the copy qubits $\{X_i\,|\,i=1,2,\dots,(\delta_X-1)\}$. 
We perform an analogous procedure for checks with $Y$- and $Z$-type support on the qubit being considered. 

When applying the steps outlined above in practice, we do not need to apply the weight reduction step to any stabilizer of weight three or less. 
Similarly, we do not need to perform any degree reduction on a qubit that has degree three or less.
Furthermore, we do not need to perform the $X$-type degree reduction step on a qubit that is in the $X$ support of zero or one checks, and similarly for $Y$ and $Z$. 

At this point we have provide a recipe to transform an arbitrary input stabilizer code into a wire code with weight and degree three. 
Here, we do not count single qubit checks in the degree, as they do not require additional connectivity to implement. 
The wire codes we have described in this section are special in the sense that they have minimal edge lengths between the check and degree nodes in the weight reduced Tanner graph. 
In this sense they have minimal overhead amongst other wire codes constructed from the same input code. 
We remark that this family of wire codes is expected to preserve fault-tolerance as the number of additional qubits and gauge checks introduced, per qubit and check of the original code, is a constant. 
We summarize their properties in Theorem~\ref{thm:WeightReduction}. 

\begin{theorem}[Wire code weight reduction]
    \label{thm:WeightReduction}
    Applying the procedure described above in this section to an $[[n,k,d]]$ stabilizer code, with max check weight $\omega$ and qubit degree $\delta$, produces an $[[O(\delta n),k,\Omega(\frac{1}{\omega}d)]]$ subsystem code, which we call a wire code, with max check weight and degree three. 
\end{theorem}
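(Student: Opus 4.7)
The plan is to verify each of the four claims (qubit count, locality of interactions, preservation of $k$, distance bound) in turn, most of which reduce to appeals to earlier lemmas; the real work lies in carefully tallying the overhead and leveraging the single-qubit $X$ checks to forbid short ancillary logicals.

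First, I would bound the qubit overhead. The weight-reduction step, applied to a single check of weight $\omega_s$, introduces $\omega_s - 2$ new ancillary qubits (the nodes labelled $2.5,3.5,\dots,(\omega_s-1.5)$). Summing over checks gives $\sum_s (\omega_s-2) \le \sum_s \omega_s = \sum_q \delta_q \le \delta n$, where the middle equality is just double counting of the edges of the Tanner graph. The degree-reduction step, applied to a single qubit of degree $\delta_q = \delta_q^X+\delta_q^Y+\delta_q^Z$, introduces at most $\delta_q$ copy qubits (partitioned across the three Pauli branches). Summing over qubits gives $\sum_q \delta_q \le \delta n$. Hence $n_\newcode = n + O(\delta n) = O(\delta n)$, as claimed.

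Next, I would verify weight and degree by direct inspection of the construction. Every gauge check introduced in the weight-reduction step is of the form $P_iP_{i+1}Z_{\mathrm{anc}}$, $Z_{\mathrm{anc}}P_iZ_{\mathrm{anc}'}$, or $Z_{\mathrm{anc}}P_{\omega-1}P_\omega$, all of weight three, plus the single-qubit $X$ checks, which do not contribute to degree. The degree-reduction step likewise replaces a degree-$\delta_q$ vertex by a path of weight-three $ZZ$-edges and a single weight-two edge $X_0Z_1$, then re-routes each original check so that its former incidence on qubit $0$ becomes incidence on a distinct copy qubit. Each qubit (original or copy) then has at most three non-trivial neighbours in the resulting Tanner graph, ignoring the single-qubit $X$ gauge checks as prescribed.

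For $k_\newcode = k_\oldcode$, I would invoke Corollary~\ref{cor:k} directly, but the content of that corollary is that the map sending each stabilizer of $\oldcodespace$ to the product of the chain of gauge checks along the corresponding ancillary path (respectively, along the copy-qubit chain) reconstructs the original operator tensored with identity on ancilla; combined with the single-qubit $X$ checks, this gives a bijection between the logical group of $\oldcodespace$ and the bare logical group of $\newcodespace$ modulo gauge, with the $X$-checks pinning the ancilla degrees of freedom so no new logicals appear. For the distance bound $d_\newcode \ge \Omega(d_\oldcode/\omega)$, I would appeal to Lemma~\ref{lem:d}. The expected content is that any dressed logical of $\newcodespace$ can be cleaned onto a representative whose restriction to the data register is a logical of $\oldcodespace$, since the single-site $X$-checks cannot reduce the data-register weight, and multiplying by a chain-gauge check either cancels ancillary support or leaves behind the corresponding original stabilizer on the data. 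A nontrivial dressed logical therefore has data-weight at least $d_\oldcode$, and cleaning can concentrate at most $\omega$ data sites per original check onto a single branch endpoint, so the wire-code weight is at least $d_\oldcode/\omega$.

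The main obstacle I would expect is the distance argument: specifically, controlling all the ways one could try to exploit the ancillary chains to shortcut a logical. The key structural reason this cannot happen is that the $X$-type gauge checks sit on every ancilla/copy qubit but \emph{not} on the data qubits, so any representative of a dressed logical must carry matching anticommutations through the full chain whenever it imprints a non-identity Pauli on a data qubit; this is exactly the ``transport of commutator'' intuition flagged in Figure~\ref{fig:degree-reduction-explanation}. Once this is formalised, the theorem assembles trivially from the qubit count, the weight/degree count, Corollary~\ref{cor:k}, and Lemma~\ref{lem:d}.
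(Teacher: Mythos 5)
Your proposal is correct and follows essentially the same route as the paper: the qubit count by direct tallying of ancilla and copy qubits (each bounded by the number of Tanner-graph edges, $\sum_s \omega_s = \sum_q \delta_q \le \delta n$), the weight and degree by inspection of the construction, and $k$ and $d$ delegated to Corollary~\ref{cor:k} and Lemma~\ref{lem:d}, whose content you summarize accurately. The only nitpick is a harmless off-by-one (a weight-$\omega_s$ check introduces $\omega_s-3$ ancillas, not $\omega_s-2$), which does not affect the $O(\delta n)$ bound.
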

\begin{proof}
    The structure of the logical operators in the weight reduction wire codes follows the general structure in Lemma~\ref{lem:bare}. 
    From this the claimed number of encoded qubits and distance follow Corollary~\ref{cor:k} and Lemma~\ref{lem:d}, respectively. 
    Bounding the number of physical qubits simply follows by noting that there are at most $\delta n$ qubits after adding the copy qubits. 
    Then adding the check ancillary qubits includes at most one extra qubit per copy qubit. 
    The bound on the check weight and qubit degree in the wire code follow directly from the construction. 
    We note again that we do not include single qubit checks when counting the qubit degree. 
    Regarding the claim about fault-tolerance above, we remark that this procedure maps a stabilizer check of weight $\omega'$ from the input code to a stabilizer of weight at most $\omega'\omega\delta/2$ in the resulting wire code. 
    This follows from counting the additional support that can be picked up due to the intersection of the original check with additional checks as depicted in Fig.~\ref{fig:xzzx-stretched}. 
\end{proof}

In the above theorem, there is a potential reduction of the distance by the constant multiple~$\frac{1}{\omega}$. 
This is a consequence of the simple trivalent graph we have used to resolve the checks of the input code in the wire code construction. 
It is possible to make alternative choices of the graph used at this step of the construction. 
Following Refs.~\cite{bacon2015sparse,Williamson2024Gauging,Ide2024} we expect that by using sufficiently expanding graphs at this step, the distance can be preserved at the cost of higher weight checks in the resulting wire code.

\subsection{Wire codes on general graphs}
\label{sec:GraphEmbedding}

In this section we discuss mapping an input stabilizer code onto a wire code that provides a local implementation on a general graph. 

The wire code construction in the previous sections made use of interactions with flexible edge lengths, mediated locally via additional ancillary qubits and gauge checks. 
More generally, every edge in the trivalent resolution of the original Tanner graph of an input stabilizer code can be assigned a positive integer length $\ell$. 
Then $O(\ell)$ additional qubits, with single qubit $X$ gauge checks and two body $ZZ$ gauge checks, can be added to maintain a local implementation, see Fig.~\ref{fig:paths}. 

Given a $c$-to-one embedding of the resolved Tanner graph of $\oldcodespace$ into a graph $G$, we can construct a wire code with qubit density $c$ that provides a local implementation of $\oldcodespace$ on $G$. 
Here, the embedding may send edges to paths consisting of many edges, but is restricted to map at most $c$ edges or vertices to one. 
The wire code construction follows by assigning edge lengths $\ell_e$ to the Tanner graph of the weight reduction wire code for $\oldcodespace$, see section~\ref{sec:WeightReduction}, given by the lengths of edges under the graph embedding into $G$. 
We then add additional $O(\ell_e)$ qubits and gauge checks to implement each length-$\ell_e$ edge locally, see Figure~\ref{fig:WeightReductionExtension}. 
Next, the Tanner graph of this edge-extended wire code can be embedded into $G$ following the original embedding map such that the checks of the wire code are local on $G$ and the density of qubits is at most $c$ on any edge or vertex.

We remark that given a sufficiently large $c$ we can embed any stabilizer code into an arbitrary graph that has at least one vertex and edge. 
Given a constant $c$ this is not always possible, and the family of graphs that a given stabilizer code can be locally implemented on generally depends on compatibility properties such as the number of vertices and edges, and the expansion of the graph. 
For embeddings that involve many edges of diverging length, we do not expect this procedure to preserve fault tolerance without some additional procedure. 

\begin{theorem}[Wire code graph embedding]
    Given an $[[n,k,d]]$ stabilizer code $\oldcodespace$ and a $c$-to-one embedding of the trivalent resolution of its Tanner graph into a graph $G$ the wire code described above in this section provides an $[[O(\ell_{\text{max}}\delta n),k,\Omega(\frac{1}{\omega}d)]]$ subsystem code that we say implements $\oldcodespace$ with local checks on $G$. 
    Here $\ell_{\text{max}}$ is the longest edge length assigned by the graph embedding. 
\end{theorem}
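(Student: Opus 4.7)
The plan is to bootstrap from Theorem~\ref{thm:WeightReduction}, which already delivers the weight-reduced wire code with parameters $[[O(\delta n),k,\Omega(\frac{1}{\omega}d)]]$ from the trivalent resolution of the Tanner graph of $\oldcodespace$. The remaining content of this theorem is to show that (i) replacing each edge of the resolved Tanner graph by a chain of ancillas of length matching the embedding preserves $k$ and the distance bound, and (ii) the resulting code is local on $G$ with the claimed qubit overhead. The proof should therefore be organised as a reduction to Corollary~\ref{cor:k} and Lemma~\ref{lem:d}, plus a counting argument for the number of qubits.

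First I would formalise the edge-extension step. For each edge $e$ of the weight-reduced Tanner graph, the embedding assigns a path of length $\ell_e \leq \ell_{\text{max}}$ in $G$. Along this path I place $O(\ell_e)$ ancillary qubits joined by two-body $ZZ$ gauge checks, together with a single-qubit $X$ gauge check on each ancilla; this is exactly the $Z$-branch construction of Definition~\ref{def:branch}. The resulting Tanner graph is, up to the extra ancillas per edge, the same as the weight-reduced Tanner graph, and by the $c$-to-one property at most $c$ such ancillas are stacked on any vertex or edge of $G$, so the checks are local on $G$ by construction. Summing $O(\ell_e)$ over the $O(\delta n)$ edges of the weight-reduced Tanner graph gives the qubit count $O(\ell_{\text{max}}\delta n)$.

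Next I would argue that $k$ and the distance are inherited from the unextended wire code. The key observation is that the insertion of a $Z$-branch of length $\ell_e$ along an edge is a gauge-equivalent reformulation of the length-one edge: the single-qubit $X$ gauge checks allow one to clean any operator off the interior ancillas of the chain unless the operator commits a $Z$-type pattern along the whole chain, in which case the $ZZ$ gauge checks reduce it to its endpoint action, exactly reproducing the endpoint coupling of the original length-one edge. Consequently there is a bijection between (dressed) logicals of the extended wire code and those of the weight-reduced wire code that preserves the data-qubit support. Applying Corollary~\ref{cor:k} then gives $k_\newcode = k_\oldcode$, and Lemma~\ref{lem:d} gives $d_\newcode \geq \Omega(\tfrac{1}{\omega}d_\oldcode)$, completing the proof.

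The main obstacle is the bijection-of-logicals step: one has to verify carefully that long $Z$-chain extensions do not introduce spurious low-weight bare or dressed logicals, either by new cleaning moves among overlapping chains sharing a vertex of $G$, or by unintended commutation between Pauli branches passing through the same site. This is controlled by the fact that the $c$-to-one embedding keeps distinct extended edges logically independent — the gauge checks of one edge act trivially on the ancillas of another edge stacked at the same site — so the argument reduces to the edge-by-edge analysis already encapsulated in Lemma~\ref{lem:d}. Once that independence is established, the remainder of the proof is a direct invocation of the weight-reduction result, mirroring the short proofs of Theorems~\ref{thm:2d-wire-codes}, \ref{thm:wire-codes}, and~\ref{thm:expander-wire-codes}.
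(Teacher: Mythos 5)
Your proposal is correct and follows essentially the same route as the paper: the code parameters are obtained by reducing to the general structure results (Lemma~\ref{lem:bare}, Corollary~\ref{cor:k}, Lemma~\ref{lem:d}), locality follows from the $c$-to-one embedding by construction, and the qubit count comes from multiplying the $O(\delta n)$ edges of the resolved Tanner graph by $\ell_{\text{max}}$. The paper's proof is just a terser version of the same argument (it notes $2\delta n$ degree-three nodes, hence $\leq 3\delta n$ edges and $\leq 3\ell_{\text{max}}\delta n$ edge qubits), and the "bijection of logicals" concern you flag is already absorbed into the general branch-based analysis of Lemmas~\ref{lem:bare} and~\ref{lem:d}, which never assumed unit edge lengths.
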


\begin{proof}
    Similar to the proof of Theorem~\ref{thm:WeightReduction}, the code properties of the wire code follow from Lemmas~\ref{lem:bare}, and~\ref{lem:d}. 
    Bounding the number of physical qubits follows by noting that there are at most $2 \delta n$ nodes in the resolved Tanner graph, all of which have degree-three. 
    Hence, there are at most $3 \delta n$ edges and consequently there are at most $3 \ell_{\text{max}} \delta n$ additional edge qubits. 
\end{proof}

\section{Code Properties} 
\label{sec:CodeProperties}

In this section we analyze the code properties of wire codes, including the number of physical qubits they require, the number of logical qubits they encode, their distance, relations, and how they can be used to perform syndrome extraction. 

\paragraph{Calculating $\barelogicals$.}

In the process of breaking down the checks of $\oldcodespace$ following Algorithms~\ref{alg:2d-wire-alg},~\ref{alg:high-d-wire-codes-alg},~\ref{alg:expander-wire-codes-alg} the original system on which $\oldcodespace$ lived, which we write $\mathcal{H}_{\textsf{data}} = (\mathbb{C}^2)^{\otimes\nold}$ has been augmented by the addition of supplementary registers. 
The resulting code $\newcodespace$ now lives on $\mathcal{H}_{\textsf{data}} \otimes \mathcal{H}_{\textsf{anc}} \otimes \mathcal{H}_{\textsf{copy}}= (\mathbb{C}^2)^{\otimes\nnew}$, and the implications of extending the code onto additional qubits remains to be analyzed. 
In this section, we show that the bare logical operators of the wire codes obey a rigid structure. 
This structure is crucial in our arguments to lower bound the distance of the resulting code, see below.

We restate several definitions here for convenience. 
The input stabilizer code $\oldcodespace$ is taken to live on $\nold$ qubits, and the resulting wire code $\newcodespace$ lives on $\nnew$ qubits. 
The code $\oldcodespace$ has stabilizer group $\stabs$, while $\newcodespace$ has gauge group $\gauge$. 
With these preliminaries we can define the respective logical operators of these codes 
\begin{align}
    \logicalsold &= \normalizer{\stabs} \setminus \stabs
    \\
    \logicalsnew &= \normalizer{\gauge} \cdot \gauge \setminus \gauge \ .
\end{align}
It is conventional to define a special subset of the logical operators of a subsystem code, referred to as the $\emph{bare}$ logical operators
\begin{align}
    \barelogicals = \normalizer{\gauge} \setminus \gauge.
\end{align}
We remark that the bare logical operators commute with $\gauge$, unlike $\logicalsnew$.
A similar expression will allow us to compute how many logical qubits $\newcodespace$ encodes
\begin{align}
    k_\newcode = \log_2\dim \left( \normalizer{\gauge} / \gauge \right).
\end{align}

We proceed by introducing some notation regarding the construction. Every $data$ qubit ${q \in [\nold]}$ introduces a set of $copy$ qubits. This $data$ qubit and these $copy$ qubits support a set of gauge checks denoted $\Gamma_q$. 
The set of single-site gauge checks is denoted $\Gamma_q^{\text{single-site}}$, while the remainder are denoted $\Gamma_q^{\text{copy}}$. 
Every stabilizer generator $s$ of $\oldcodespace$ is mapped to an associated set of gauge operators $\Gamma_s \subset \gauge$.
Some of these gauge operators correspond to single-site $X$ gauge operators, that we denote by $\Gamma_s^{\text{single-site}}$, while the remainder are denoted $\Gamma_s^{\text{anc}}$. 
See Fig.~\ref{fig:gamma-notation} for an illustration.

\begin{figure}[H]
    \centering
    \includegraphics[page=40, scale=.6]{diagrams-wire-codes.pdf}
    \caption{A simple example of a wire code for the purpose of illustrating our notation. Here, we do not explicitly depict the single-site $X$'s.}
\label{fig:gamma-notation}
\end{figure}

We say that ${\Gamma^{\text{single-site}} = \bigcup_q \Gamma_q^{\text{single-site}} \bigcup_s \Gamma_s^{\text{single-site}}}$ generates the group $\mathcal{G}_{\text{single-site}}$, while ${\Gamma^{\text{copy}} = \bigcup_q \Gamma_q^{\text{copy}}}$ generates $\mathcal{G}_{\text{copy}}$ , and $\Gamma^{\text{anc}} = \bigcup_s \Gamma_s^{\text{anc}}$ generates $\mathcal{G}_{\text{anc}}$. Naturally, $\gauge = \mathcal{G}_{\text{single-site}} \cdot \mathcal{G}_{\text{anc}} \cdot \mathcal{G}_{\text{copy}}$.
\begin{lemma}
\label{lem:bare}
    Given an input code $\oldcodespace$, with logical operators $\logicalsold = \{L_i\}_i$, then the associated wire code $\newcodespace$ satisfies
    \[
        \barelogicals = \{  L_i \otimes C_i \otimes A_i\}_i
    \]
    where $L_i$ lives on the $\data$ subsystem, while $C_i \in \mathcal{P}_{\copyr}^X$, $A_i \in \mathcal{P}_{\anc}^X$ are uniquely determined by $L_i$. 
\end{lemma}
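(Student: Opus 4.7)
The plan is to exploit the three-fold decomposition $\gauge = \mathcal{G}_{\text{single-site}} \cdot \mathcal{G}_{\text{copy}} \cdot \mathcal{G}_{\text{anc}}$ introduced just above the lemma, and propagate commutation constraints from the boundary inward. First I would observe that any $L \in \normalizer{\gauge}$ must commute with every single-site $X$ in $\mathcal{G}_{\text{single-site}}$, which forbids any $Y$- or $Z$-content on the copy and ancilla subsystems. Hence every bare logical factorizes as $L = D \otimes C \otimes A$ with $D \in \mathcal{P}_\data$, $C \in \mathcal{P}^X_\copyr$, and $A \in \mathcal{P}^X_\anc$, already matching the tensor structure claimed by the lemma.

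Second, I would walk along each copy $P$-branch of $\mathcal{G}_{\text{copy}}$, whose nontrivial checks have the form $P_q Z_{r_1}$ and $Z_{r_i} Z_{r_{i+1}}$ as specified in Definition~\ref{def:branch}. The internal $ZZ$ checks force the $X$-content of $C$ on the branch to be uniform along its length, and the endpoint check $P_q Z_{r_1}$ then pins down this uniform choice by the commutator of $D|_q$ with $P$: the branch carries $X$'s everywhere exactly when $D|_q$ anticommutes with $P$, and is otherwise trivial. Repeating this across every copy branch of every data qubit uniquely determines $C$ from $D$. The analogous chain argument along each ancilla branch in $\mathcal{G}_{\text{anc}}$, whose endpoints are copy qubits whose $X$-values have just been prescribed, uniquely determines $A$ from $C$ and hence from $D$.

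Third, I would show that $D$ must lie in $\normalizer{\stabs}$. Multiplying together all gauge generators that implement a given input stabilizer $s = P_1 \cdots P_\omega$ in the wire code yields a telescoping product whose action on $\data$ is exactly $s$ and whose action on the aux subsystems is a product of $X$'s, i.e.\ an element of $\mathcal{G}_{\text{single-site}} \subset \gauge$. Commutation of $L$ with every individual gauge check therefore forces $D$ to commute with $s$, and doing this for every stabilizer places $D \in \normalizer{\stabs}$. Conversely, for any $D \in \normalizer{\stabs}$ the $C$ and $A$ constructed in the previous paragraph yield an element $D \otimes C \otimes A \in \normalizer{\gauge}$, and this element is a gauge operator iff $D \in \stabs$, which establishes the claimed parametrization of $\barelogicals$ by representatives of $\logicalsold$.

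The hard part will be the well-posedness of the propagation in the second step whenever an ancilla branch links two copy qubits whose $X$-values were each independently pinned by $D$: the uniformity constraint along the branch must be simultaneously consistent with both endpoint values. I expect this to reduce to the telescoping identity used in the third step, so that global consistency of the propagation is exactly the condition $D \in \normalizer{\stabs}$. The fact that the trivalent resolution of the Tanner graph underlying the wire code makes every branch a simple chain, with no additional cycles between junctions, should make this bookkeeping tractable and allow the argument to close.
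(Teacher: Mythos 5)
Your proposal is correct and follows essentially the same route as the paper's proof: the single-site $X$ checks restrict bare logicals to $\mathcal{P}_{\data}\cdot\mathcal{P}_{\anc}^X\cdot\mathcal{P}_{\copyr}^X$, the repetition-code structure of the branches uniquely propagates $C$ and $A$ from the data restriction, and the telescoping product of gauge checks recovering each input stabilizer forces that restriction into $\normalizer{\stabs}$. Your explicit treatment of the junction consistency at check nodes (showing it coincides with the condition $D\in\normalizer{\stabs}$) is in fact slightly more detailed than the paper's, which simply asserts uniqueness from the ``repetition-code-like nature'' of the copy checks; the only piece you leave implicit is the paper's closing observation that commutation relations are preserved, which is what certifies that a non-trivial $L_i$ yields $L_i\otimes C_i\otimes A_i\notin\gauge$.
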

\begin{proof}
    We start by writing $\mathcal{P}_{\nnew} = \mathcal{P}_{\data} \cdot \mathcal{P}_{\anc}^X\cdot \mathcal{P}_{\anc}^Z \cdot \mathcal{P}_{\copyr}^X\cdot \mathcal{P}_{\copyr}^Z$. 
    Because $\mathcal{G}_{\text{single-site}}$ is generated by single-site $X$'s, then for any non-identity element of $\mathcal{P}_{\anc}^Z \cdot \mathcal{P}_{\copyr}^Z$, there exists an element of $\mathcal{G}_{\text{single-site}}$ that anti commutes with it, so $\barelogicals \subset  \mathcal{P}_{\data} \cdot \mathcal{P}_{\anc}^X \cdot \mathcal{P}_{\copyr}^X$.
    
    We move on to show that $\stabs \subset \gauge$. Let $s \in \stabs$, then one can see that there exists $g_{\text{copy}} \in \mathcal{P}_{\copyr}^X$ such that
    \[\left(\prod_{g \in \Gamma_s^{\text{anc}}}  g \right) \cdot g_{\copyr} = s \otimes \idty_{\copyr} \otimes  \idty_{\anc} \]
    This establishes that for $L_{\text{wire}} \in \normalizer{\gauge}$, $\text{proj}_{\data}(L_{\text{wire}}) \in \normalizer{\stabs} $. 
    Equivalently, $L_{\text{wire}} = L\otimes C \otimes A$, for $L \in \normalizer{\stabs}, C \in \mathcal{P}_{\copyr}^X $, $A \in \mathcal{P}_{\anc}^X$. Due to the repetition-code-like nature of $\Gamma_{\text{copy}}$, $C$ can be seen to be unique. Similarly for $A$.

    Next, we need to verify that whenever $L\in \stabs$, then $L\otimes C \otimes A \in \gauge$. This is readily verified by the fact that $C \otimes A \in \mathcal{P}_{\anc}^X \cdot \mathcal{P}_{\copyr}^X \subset \langle \Gamma^{\text{single-site}} \rangle \subset \gauge $ , and 
    \[\left(\prod_{g \in \Gamma_s^{\text{anc}}}  g \right) \cdot g_{\copyr} = s \otimes \idty_{\copyr} \otimes  \idty_{\anc}  \in \gauge . \]
    Furthermore, one can verify that for $L\in \stabs$ we have $L\otimes C \otimes A \in \mathcal{S}_{\text{wire}}$ as it commutes with all gauge operators in $\gauge$. 

   Finally, we observe that for any $i,j$, we have that $L_i \otimes C_i \otimes A_i$ and $L_j \otimes C_j \otimes A_j$ have the same commutation relation as $L_i$ and $L_j$. 
   This specifically guarantees that when $L \in \logicalsold$ is non-trivial, the associated $L\otimes A \otimes C$ is non-trivial.
\end{proof}
\begin{corollary}
    \label{cor:k}
    $k_\newcode = k_\oldcode$
\end{corollary}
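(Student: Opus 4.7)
The plan is to promote the bijective correspondence between logicals established in Lemma~\ref{lem:bare} into a symplectic isomorphism of logical Pauli groups, from which $k_\newcode = k_\oldcode$ follows immediately.

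First I would define the map $\varphi : \normalizer{\stabs} \to \normalizer{\gauge}$ by $\varphi(L) = L \otimes C_L \otimes A_L$, where $C_L \in \mathcal{P}_{\copyr}^X$ and $A_L \in \mathcal{P}_{\anc}^X$ are the unique operators produced by Lemma~\ref{lem:bare}. Uniqueness turns $\varphi$ into a group homomorphism, and projection onto $\mathcal{P}_{\data}$ is a left inverse, so $\varphi$ is injective. Lemma~\ref{lem:bare} guarantees that every bare logical of $\newcodespace$ lies in the image of $\varphi$, and that $\varphi(\stabs) \subset \gauge \cap \normalizer{\gauge}$ via the identity $\bigl(\prod_{g \in \Gamma_s^{\text{anc}}} g\bigr)\, g_{\copyr} = s \otimes \idty_{\copyr} \otimes \idty_{\anc}$ used in that proof.

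The key remaining step is the converse: if $\varphi(L) \in \gauge$, then $L \in \stabs$. Once this is in hand, $\varphi$ descends to an isomorphism
\[
\bar\varphi : \normalizer{\stabs}/\stabs \xrightarrow{\;\cong\;} \normalizer{\gauge}/(\gauge \cap \normalizer{\gauge}).
\]
To verify the converse I would exploit the generating-set decomposition $\gauge = \mathcal{G}_{\text{single-site}} \cdot \mathcal{G}_{\text{anc}} \cdot \mathcal{G}_{\text{copy}}$ introduced before Lemma~\ref{lem:bare}: any gauge element of the very restricted form $L \otimes C \otimes A$ with $C \in \mathcal{P}_{\copyr}^X$ and $A \in \mathcal{P}_{\anc}^X$ must arise as a product of single-site $X$ checks and the gauge lifts of stabilizer generators of $\oldcodespace$, forcing $L \in \stabs$. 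This is the main technical point of the corollary, and it is essentially a repackaging of the uniqueness-of-decomposition argument already carried out inside Lemma~\ref{lem:bare}.

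Finally, the closing observation of Lemma~\ref{lem:bare} that $\varphi$ preserves commutation relations upgrades $\bar\varphi$ to a symplectic isomorphism of logical Pauli groups. Since the number of encoded qubits is half the $\mathbb{F}_2$-dimension of this symplectic quotient in both codes, we conclude $k_\newcode = k_\oldcode$. I do not expect any genuine obstacle here: Lemma~\ref{lem:bare} already does the structural work, and this corollary is a direct bookkeeping consequence via the first isomorphism theorem.
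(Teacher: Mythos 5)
Your proposal is correct and follows essentially the same route as the paper, whose entire proof is to invoke Lemma~\ref{lem:bare} and read off the isomorphism $\logicalsold \cong \logicalsnew$; your extra step (the converse $\varphi(L)\in\gauge \Rightarrow L\in\stabs$) is handled in the paper by the closing observation of Lemma~\ref{lem:bare} that the map preserves (anti-)commutation relations, so a nontrivial logical cannot land in $\gauge$. Your expansion is a faithful, more explicit bookkeeping of the same argument.
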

\begin{proof}
    From Lemma \ref{lem:bare}, we have that $\logicalsold \cong \logicalsnew$.
\end{proof}
\begin{lemma}
    \label{lem:d}
    $d_{\newcode} \geq \frac{1}{\omega} d_\oldcode $ 
\end{lemma}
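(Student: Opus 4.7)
The plan is to prove the distance bound by producing, from any minimum-weight representative $O$ of a non-trivial dressed logical class $[L_i]$ of $\newcode$, a data-only logical operator $\tilde{L}$ of $\oldcode$ in the corresponding class $L_i \in \logicalsold$ with $|\tilde L| \leq \omega|O|$. Since $|\tilde L|\geq d_\oldcode$, this yields $|O|\geq d_\oldcode/\omega$ and completes the lemma. Conceptually, the weight budget of $O$ will be related to the weight budget of an $\oldcode$-logical it induces, with each qubit of $O$'s support accounting for at most $\omega$ data Paulis.

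First I would simplify $O$ by multiplying by single-site $X$ gauge checks on every copy and ancilla qubit to remove any $X$- or $Y$-component on the $\anc\cup\copyr$ subsystems, noting that these moves can only decrease or preserve $|O|$. After this reduction, the restriction of $O$ to $\anc\cup\copyr$ consists purely of $Z$ operators (or identities). This uses the observation from the informal discussion preceding the lemma that the single-site $X$ checks cannot affect the data subsystem.

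Next I would invoke the multi-qubit $Z$-type gauge generators---chain checks $Z_{r} Z_{r'}$ along each $P$-branch and each wire, boundary checks $P_q Z_{r_1}$ at the data/copy interface, and terminal chain checks at check sites---to transport the $Z$-configuration of $O|_{\anc\cup\copyr}$ onto the data subsystem. As highlighted in the paper's informal discussion, a product of these generators is either equal to a full stabilizer tree for some $s\in\stabs$ (realising $s\otimes\idty$ on data and trivially on $\anc\cup\copyr$) or else leaves a nontrivial residual on $\anc\cup\copyr$. Pushing each $Z$ of $O$ across a boundary check yields a single data Pauli $P_q$, and absorbing a residual at a check site by invoking the full stabilizer tree costs at most $\omega_s\leq\omega$ data Paulis. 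The result is a gauge-equivalent representative $\tilde L$ supported entirely on the data subsystem, which by Lemma~\ref{lem:bare} and the bijection between bare logicals and $\logicalsold$ lies in the logical class $L_i$.

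The weight bound $|\tilde L|\leq\omega|O|$ then follows by charging each data Pauli of $\tilde L$ to the qubit of $O$ that produced it: a data qubit in $O$'s support contributes exactly $1$, a copy qubit contributes at most $1$ via its branch boundary check, and an ancilla qubit contributes at most $\omega$ via the full stabilizer tree of its associated check. Summing over the support of $O$ gives $|\tilde L|\leq\omega|O|$, and combined with $|\tilde L|\geq d_\oldcode$ yields the desired bound. The main obstacle I expect is in making this cleaning and charging procedure canonical: one must ensure that each ancilla qubit in $O$'s support is assigned to a unique stabilizer-tree invocation so that data Paulis of $\tilde L$ are not over-counted, and that the final data operator $\tilde L$ genuinely lies in the class $L_i$ rather than a gauge-equivalent but distinct class of $\oldcode$. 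The factor of $\omega$ in the bound is inherent to this accounting because absorbing a single residual at a check site can in principle require the full weight-$\omega$ stabilizer $s\in\stabs$.
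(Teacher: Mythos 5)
Your proposal is correct and follows essentially the same approach as the paper's proof: both arguments rest on the facts that the single-site $X$ gauge checks absorb the $X$-type part on the $\anc$ and $\copyr$ registers for free, and that each check's $Z$-type gauge chain is either used in full (reproducing the stabilizer $s$ on the data, which preserves the logical class) or used partially (leaving weight at least one on the ancilla branch while affecting at most $\omega$ data positions). The only difference is presentational -- you run the argument forwards as an explicit cleaning-and-charging map onto the data register, while the paper lower-bounds the weight of an arbitrary representative via the decomposition of Lemma~\ref{lem:bare} and its $Q_1$/$Q_2$ branch dichotomy -- and the two caveats you flag (canonical charging and class preservation) are exactly the points the paper's Lemma~\ref{lem:bare} settles.
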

\begin{proof}
    We have that 
    \[d_\text{wire} = \min_{L_\text{wire} \in \barelogicals\cdot\gauge} |L_\text{wire}| \]
    Lemma \ref{lem:bare} allows us to decompose any element $\tilde{L} \in \logicalsnew$ as $\tilde{L}=(L \otimes C \otimes A) g_{\text{single-site}} g_{\text{copy}} g_{\text{anc}}$.    
    We use this decomposition to compute the distance. 
    Indeed, $L g_{\text{copy}} g_{\text{anc}} \in \mathcal{P}_{\textsf{data}} \cdot \mathcal{P}_{\textsf{anc}}^Z \cdot \mathcal{P}_{\textsf{copy}}^Z$ while $(C \otimes A) g_{\text{single-site}} \in \mathcal{P}_{\textsf{anc}}^X \cdot \mathcal{P}_{\textsf{copy}}^X$. Hence 
\[
|\tilde{L}| = |L g_{\text{copy}} g_{\text{anc}}|+|(C \otimes A) g_{\text{single-site}} |
\]
    Because $\mathcal{G}_{\text{single-site}} =  \mathcal{P}_{\textsf{anc}}^X \cdot \mathcal{P}_{\textsf{copy}}^X$, it is always possible to find $g_{\text{single-site}}$ such that ${(C \otimes A) g_{\text{single-site}} = \idty}$. 
    The calculation of the distance $d_\newcode$ is then simplified
    \[
    d_\newcode = \min_{L\in \logicalsold, \ \ g_{\text{anc}} \in \mathcal{G}_{\text{anc}}, \ \ g_{\text{copy}} \in \mathcal{G}_{\text{copy}}} |L \ g_{\text{anc}} \ g_{\text{copy}}|
\]
    We  write $g_{\text{anc}} = \prod_{g \in T} g$, where $T$ is a subset of the generators of $\mathcal{G}_{\text{anc}}$. 
    It is natural to partition $T$ depending on what stabilizers its members are induced by
\[
T = \bigcup_s T_s, \ T_s = T \cap \Gamma_s^{\text{anc}} 
\]
    Using this notation, we can rewrite $g_{\text{anc}}$,
    \[
    g_{\text{anc}} = \prod_{s_1}^{s_m} g_{\text{anc}}(s), \ g_{\text{anc}}(s) = \prod_{g \in T_s} g.
    \]

    We now remark upon the structure of $\Gamma_{\text{anc}}$. 
    Consider a stabilizer $s$ of the form ${P_1 \otimes P_2 \otimes \dots \otimes P_{\omega}},$ $P_i \in \{X,Y,Z\} $, then  $\prod_{g \in \Gamma^{\text{anc}}_s}g = Z_1 \otimes Z_2 \otimes \dots \otimes Z_{\omega} \equiv \eta_s$. 
    It is crucial to our argument that $T_s$ follows the same behaviour. 
    To keep track of the relevant data, we write $s \in Q_1$ when the restriction of $\prod_{g \in T_s} g $ to the $\copyr$ register is equal to $\eta_s$. Otherwise, we write $s \in Q_2$.

    Let $s_1 \in Q_1$, then $|L \ g_{\text{anc}} \ g_{\text{copy}}| =  |L \ \left( \prod_{s_1}^{s_m} g_{\text{anc}}(s) \right )\ g_{\text{copy}}| \geq |L \left( \prod_{s_2}^{s_m} g_{\text{anc}}(s) \right) \eta_{s_1} g_{\text{copy}}|$. 
    Here, $\eta_{s_1}$ is supported on some $\copyr$ qubits. The next step is to address to what extent the \emph{copy} gauge checks can reduce the weight of the remaining operator. 
    Without-loss-of-generality consider a $P$-branch of a qubit $q$. The product of $\eta_s$ operators for different stabilizers $s$ supported on that branch has a certain parity. When that parity is even it corresponds to an even number of stabilizers $s$ acting on $q$, and the product of the $\eta_s$ operators can be removed by picking the appropriate $g_{\text{copy}}$. Similarly, the parity is only odd when there is an odd number of stabilizers $s$ acting on $q$, and the weight has to be at least one. 
    So for example $|L \otimes \idty_{\text{anc}} \otimes \eta_{s_1} \eta_{s_2} \dots  g_{\text{copy}}| \geq |L s_1 s_2 \dots | \geq d_{\oldcode}$.

    In the second case $g_{\text{anc}} \in Q_2$, then $\prod_{g \in T_s} g$ restricted to the $\textsf{copy}$ subsystem has weight at most $|s|-1$,  
    and has weight at least $1$ on the $\textsf{anc}$ subsystem. This implies that $\prod_{g \in T_s} g$ can remove at most $|s|-1$ elements on the $\textsf{copy}$ subsystem \emph{but} at the cost of inducing an operator of weight at least $1$ on its \textsf{anc} subsystem. 

    Combining these properties allows for the following inequality:
    \begin{align*}
         |L \ g_{\text{anc}} \ g_{\text{copy}}|  &= \left|L \left(\prod_{s \in Q_1} \prod_{g \in T_s} g  \right) \left (\prod_{s \in Q_2} \prod_{g' \in T_s} g' \right) g_{\text{copy}}\right| \\
         &\geq \frac{1}{\omega - 1} \left|L \left(\prod_{s \in Q_1} \prod_{g \in T_s} g  \right) g_{\text{copy}}\right| \\
         & \geq \frac{1}{\omega - 1} \left|L \left(\prod_{s \in Q_1} \eta_s  \right) g_{\text{copy}}\right| \\
         & \geq  \frac{1}{\omega - 1} \left|L \left(\prod_{s \in Q_1} s  \right) \right| \\
         & \geq \frac{1}{\omega - 1}  d_\oldcode
    \end{align*}
\end{proof}

\paragraph{Stabilizer check relations.}
A relation $r$ in the input code is a collection of stabilizer checks $\{s_i\}_{i\in r}$ whose product is the identity, i.e.~$\prod_{i \in r} s_i= \mathds{1}$. 
Each such relation gives rise to a relation in the associated wire code. 
That is, the collection of checks in the wire code $\Gamma_r$ given by the image of the checks $\{s_i\}_{i\in r}$ also form a relation as their product is the identity. 
Conversely, any relation in the wire code is the image of a relation from the input code. 

\paragraph{Syndrome extraction.}
The wire codes define a simple syndrome extraction procedure to measure the stabilizers of the input code. 
This procedure can be viewed as a weight reduction of the standard syndrome extraction procedure for the input code. 
\begin{itemize}
    \item 
We partition the checks of the input code into non-overlapping sets which define a measurement schedule. 
    \item 
All ancillary and copy qubits are initialized in the $\ket{+}$ state. 
    \item 
The procedure consists of two steps for each partition, which then repeat. 
\begin{itemize}
    \item The first step consists of measuring the $ZZ$, $ZZZ$, and $ZP$, gauge operators between the ancillary qubits, copy qubits and original qubits associated to the checks in the current partition. 
    This step has circuit depth 3 due to the three body checks, and maximum qubit degree 3. 
    \item The second step consists of simply measuring single-qubit $X$ gauge operators on the ancillary qubits and copy qubits associated to the checks in the current partition. 
\end{itemize}
    \item 
The procedure then moves on to the next partition. 
\end{itemize}
The above syndrome extraction procedure defines a floquet code that uses measurements of weight three and less to extract the syndromes of an arbitrary stabilizer code. 
As it is based on the wire code construction, these floquet codes can be adapted to general connectivity constraints, at the cost of qubit overhead. 
It is possible that the efficiency of the above syndrome extraction schedule can be improved based on more efficient syndrome extraction circuits for the input code and by using the degree reduction ancilla qubits to improve parallelism. 

The above procedure is very similar to Shor-style syndrome extraction~\cite{shor1996fault}. 
The fault-tolerance of the input syndrome extraction circuit can be affected by the additional qubits. 
Generically, these additional qubits are expected to lower the threshold. 
Below, we argue that if a constant number of ancilla qubits are introduced for each qubit and check of the original code, the threshold should only be reduced by a constant multiplicative factor and hence the existence of a nonzero threshold is preserved. 
If a growing number of additional qubits are used for the purposes of embedding a highly connected code into a graph with low connectivity, we expect the stabilizers to become nonlocal and there to be no threshold without the application of additional techniques. 
See Refs.~\cite{Li2018,Gidney2023} for related work in this direction. 

\paragraph{Preservation of a fault-tolerance threshold.} We now argue that wire codes derived from qLDPC codes, that only require the addition of a constant number of ancilla qubits for each check and qubit in the original code, preserve the existence of a fault-tolerance threshold when the above syndrome extraction scheme is used. 
In particular, this condition is satisfied by wire codes that directly implement weight reduction of a qLDPC code with minimal overhead. 

We consider rounds of syndrome measurement under a phenomenological noise model that consists of local Pauli and measurement faults~\cite{dennis2002topological}. 
We assume that the input code has a fault-tolerant syndrome extraction procedure which organizes checks into a constant number of groups, each of which contains non-overlapping checks which can be simultaneously measured. 
We further assume that this syndrome extraction procedure leads to a finite fault-tolerant threshold against local Pauli and measurement noise for the family of input qLDPC codes considered. 

In the syndrome extraction procedure for the associated wire codes, there are a number of additional fault locations due to the addition of ancilla qubits and the weight-reduced syndrome checks.  
We now argue that each of these faults is captured by some equivalent local fault in the original syndrome extraction procedure, and that there are a constant number of new faults contributing to any original fault. 
Hence, the wire code construction leads to an equivalent detector error model~\cite{McEwen2023,Delfosse2023} in the fault-tolerant syndrome extraction setting, but with a different rate of local errors. Since the boost to the error rate is upper bounded by a constant, the existence of a finite fault-tolerance threshold in the original code implies the existence of one for the wire code. 

In the syndrome extraction procedure for the wire code, the weight reduced checks are measured following the syndrome extraction schedule of the original code. 
This leads to a number of new fault locations that lead to effectively higher error rates. 
\begin{itemize}
    \item
First, to schedule the preparation and measurement of the ancilla qubits in the wire code can require the addition of two time steps between layers of the original syndrome measurements. These extra time steps increase the effective rate of Pauli errors per original time step by a constant factor. 
    \item 
Second, additional Pauli and measurement faults that occur on the ancilla system and gauge checks can also increase the effective error rate of the original code. 
These errors come in two types. 
\end{itemize}
The first type of errors are equivalent to local Pauli errors on the original code. 
This includes $\ket{+}$ state preparation, $X$ measurement, and Pauli $Z$ errors on the ancilla qubits of the wire code. 
All of these error types are equivalent to Pauli $Z$ errors on the ancilla qubits between their initialization and read-out in the $X$ basis. 
Any product of such Pauli $Z$ errors on the ancilla qubits associated to an original check is equivalent to a product of Pauli errors in the support of that check up to multiplication with the $ZZ$, $ZZZ$, and $ZP$ operators that are measured to infer that check. 
For an original code that is qLDPC these equivalent Pauli errors are constant weight and local on the original Tanner graph.

The second type of errors are equivalent to local measurement errors on the checks of the original code. 
This includes measurement errors of the $ZZ$, $ZZZ$, and $ZP$ gauge checks, and Pauli $X$ errors that occur between the measurement of pairs of these gauge checks. 
For wire codes that associate a constant number of ancilla qubits to each check and qubit of the original code, e.g.~from the direct weight reduction of a qLDPC code, the above syndrome extraction schedule leads to an equivalent set of detectors~\cite{McEwen2023} (formed by consecutive measurements of the original checks) undergoing a local error model that has a constant increase in the rate of additional local Pauli and measurement errors when compared to the syndrome extraction procedure of the original code. 

The above notion of equivalence between detector error models under weight reduction is similar to definitions of fault-tolerant equivalence that have been formalised in recent works Ref.~\cite{Rodatz2025}.

\section{Discussion} 
\label{sec:Discussion}

In this work we introduced wire codes. 
The wire code construction is sufficiently flexible that we were able to use it to derive a number of results. 
This includes a general weight reduction procedure for stabilizer codes that produces a related subsystem code with weight and degree three. 
By implementing long range edges via sequences of short range gauge checks, we were able to combine wire codes with graph embeddings to map stabilizer codes onto local implementations on general graphs. 
In particular, applying this procedure to map good qLDPC codes to hypercubic lattices allowed us to produce local subsystem codes that have optimal scaling code parameters in any spatial dimension, saturating the bounds from Ref.~\cite{bravyi2011subsystem}. 
Similarly, applying this procedure to map good qLDPC codes to families of expander graphs resulted in local subsystem codes with code parameters that depend on the degree of expansion. 

We anticipate that our results can be generalized to stabilizer codes based on prime dimensional qudits via a suitable modification to the qudit degree reduction step in Eq.~\eqref{eq:DegRed} and Fig.~\ref{fig:degree-reduction-explanation} that we now sketch. Rather than relying on a decomposition into $X$, $Y$, and $Z$ Pauli operators, as we have done for qubit codes, one can express an arbitrary qudit Pauli operator as a product of powers of clock and shift matrices. Using this decomposition, the degree reduction of a qudit requires two rows of ancilla qubits, one for the clock matrices and one for the shift matrices. Consider a qudit $q$ that is acted upon by a product of clock and shift matrices in a stabilizer check $s$, the weight reduced version of $s$ must act on an ancilla qudit in both the clock and shift row of ancilla qubits connected to $q$ to effectively implement the desired qudit Pauli operator on $q$.  

The results in this work open a route to map codes that have highly connected Tanner graphs and large check weight, including good qLDPC codes, to graphs with more restricted connectivity. 
The flexibility of our construction provides a useful tool to bridge the gap between hardware constraints and the requirements to implement highly efficient good qLDPC codes.

\subsection*{Acknowledgements}
    After completing this work, we learned of a forthcoming work that shares some similarity to our weight-reduced syndrome extraction procedure~\cite{JulioPreprint,Rodatz2024}. 
    NB is very grateful to Harriet Apel for visiting Sydney in February 2023, as her presence sparked the idea for this project. 
    Our results also would not be the same without Rachit Nimavat's arcane knowledge of graph theory.
    DJW thanks Michael Vasmer for useful discussions and explaining Ref.~\cite{Sabo2024} at YITP. 
    Part of this work was done while DJW was visiting the Simons Institute for the Theory of Computing. 
    This work was completed while DJW was at IBM Quantum, IBM Almaden Research Center, San Jose, CA 95120, USA. 
    DJW was supported in part by the Australian Research Council Discovery Early Career Research Award (DE220100625).

\bibliographystyle{quantum}
\bibliography{references.bib}

\appendix

\section{Upper bound on subsystem codes implemented on expander graphs}
\label{sec:upper-bound}

In this appendix we establish a bound on the code parameters of any family of subsystem codes that are implemented with local checks on a family of graphs with bounded expansion. 
Our starting point is the following geometric lemma from Ref.~\cite{baspin2023combinatorial}.
\begin{theorem}[Theorem 13 \cite{baspin2023combinatorial}]
\label{thm:comb-partitioning}
    Let $G = (V, E)$ be a graph on $n$ vertices for which there is no $U \subset V , |U | \geq m$ such that $G[U]$ is $\epsilon$-expander, then G can be partitioned into $\{A_i\}_i$ such that
    \begin{enumerate}
        \item $|A_i| < 3m$
        \item there at most $c \cdot \log(n) \cdot \alpha n$ edges with endpoints in different $A_i$’s
        
    \end{enumerate}
    for some universal constant $c$.
\end{theorem}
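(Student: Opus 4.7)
The plan is to prove Theorem~\ref{thm:comb-partitioning} by recursive sparse-cut partitioning combined with a standard charging argument. The content of the hypothesis --- no $U \subset V$ with $|U| \geq m$ has $G[U]$ being an $\epsilon$-expander --- is, by unpacking the definition of expansion, precisely the assertion that every such induced subgraph admits a cut $U = S \sqcup (U \setminus S)$ with $|E(S, U \setminus S)| < \epsilon \cdot \min(|S|, |U \setminus S|)$. This is the only ingredient needed to drive the decomposition, and I will identify the parameter $\alpha$ in the conclusion with $\epsilon$ in the hypothesis.

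First I would define the recursive procedure: start with the trivial partition $\{V\}$, and while some piece $A$ of the current partition satisfies $|A| \geq 3m$, invoke the non-expansion hypothesis on $G[A]$ (applicable since $3m > m$) to obtain a sparse cut $A = S \sqcup (A \setminus S)$, and replace $A$ by the two smaller pieces. The procedure terminates when every piece has size strictly less than $3m$, and this immediately gives the first bullet $|A_i| < 3m$. Termination is guaranteed since each step strictly decreases the size of the largest remaining piece.

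For the second bullet, I would use a charging scheme. At every step, the cut contributes fewer than $\epsilon \cdot \min(|S|, |A \setminus S|)$ edges to the global cut. I charge each such edge to an arbitrary vertex on the smaller side $S$. The key observation is that each vertex $v \in V$ is charged at most $O(\log n)$ times across the entire recursion, because whenever $v$ lies on the smaller side of a cut of its current piece $A$, we have $|S| \leq |A|/2$, so the piece containing $v$ at least halves in size; this can happen at most $\lfloor \log_2 n \rfloor$ times. Summing the charges across all vertices and noting that each charge has weight at most $\epsilon$, the total number of inter-part edges is at most $\epsilon \cdot n \cdot \log_2 n = O(\alpha n \log n)$, matching the stated bound with a universal constant.

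The main obstacle is bookkeeping rather than conceptual: one has to maintain the laminar tree of cuts cleanly and verify the halving argument unambiguously (in particular, defining ``smaller side'' consistently when $|S| = |A \setminus S|$ by an arbitrary tiebreak). Beyond this, the argument is the standard recursive balanced-separator-style decomposition applied to the non-expansion hypothesis, and I expect no further subtleties.
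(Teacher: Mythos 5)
The paper does not actually prove this statement---it is imported verbatim as Theorem 13 of Ref.~\cite{baspin2023combinatorial}---and your recursive sparse-cut decomposition with a halving/charging argument (taking $\alpha=\epsilon$) is the standard proof of precisely this kind of expander-decomposition statement and matches the cited source, so the proposal is correct in approach and substance. Two wording slips are worth repairing: the charge should be spread as $\epsilon$ per vertex of the \emph{entire} smaller side (charging each cut edge to an ``arbitrary'' single vertex could load one vertex with nearly $\epsilon|S|$ units, which breaks the ``each charge has weight at most $\epsilon$'' accounting, whereas the correct statement is that the fewer than $\epsilon|S|$ cut edges are dominated by a total charge of $\epsilon$ on each of the $|S|$ vertices that halve their piece), and termination should be justified by the strictly increasing number of nonempty parts (bounded by $n$) rather than by the largest part shrinking, which need not happen at the step where a different part is split.
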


This result states, in essence, that if a graph does not contain an expanding subgraph, then it can always be separated into small regions at small cost. We use an amplified version of this result, which considers the question of \emph{many} expanding subgraphs.

\begin{corollary}
    \label{cor:comb-partitioning}
    Let $G = (V, E)$ be a graph on $n$ vertices. Unless there is a set $\{H_j\}_j$ of disjoint subgraphs such that 
    \begin{enumerate}
        \item $|H_j| \geq m $
        \item $\sum_j |H_j| \geq M$
        \item every $H_i$ is $\alpha$-expander
    \end{enumerate}
    Then there is a set $S$ of vertices, where $G[V \setminus S]$ is a set of non-adjacent subgraphs $\{G[A_i]\}_i$ such that
    \begin{enumerate}
        \item $|A_i| < 3m$
        \item $|S| \leq M + 2c \cdot \log(n) \cdot \alpha n$
    \end{enumerate}
\end{corollary}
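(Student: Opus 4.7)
The plan is to iteratively extract $\alpha$-expanding subgraphs of size at least $m$ from $G$ and then, once no such subgraph remains, apply Theorem~\ref{thm:comb-partitioning} to the residual graph. Either the extraction produces enough total vertices to land in the first branch of the dichotomy, or else the residual graph satisfies the hypothesis of Theorem~\ref{thm:comb-partitioning} and delivers a partition that we promote to the separator $S$ by a small amount of edge-cutting.

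Concretely, I would set $G_0 = G$ and, at step $j \geq 1$, search $G_{j-1}$ for an induced subgraph $H_j$ on at least $m$ vertices that is $\alpha$-expanding. If such an $H_j$ exists I would set $G_j = G_{j-1} \setminus V(H_j)$ and continue; otherwise I would stop. The $H_j$ are pairwise disjoint by construction, and the loop terminates because each iteration removes at least $m$ vertices. If the procedure halts because $\sum_j |V(H_j)| \geq M$, then $\{H_j\}_j$ is exactly the family required by the first branch of the corollary and we are done. If instead it halts with $\sum_j |V(H_j)| < M$, then the residual graph $G'$ on at most $n$ vertices contains no induced $\alpha$-expander of size $\geq m$, so Theorem~\ref{thm:comb-partitioning} (applied with $\epsilon = \alpha$) supplies a partition $\{B_i\}_i$ of $V(G')$ with $|B_i| < 3m$ and at most $c \log(n) \alpha n$ inter-part edges.

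To produce the separator, I would take $T$ to be the set of both endpoints of every inter-part edge, so $|T| \leq 2 c \log(n) \alpha n$, and define $S := \bigl(\bigcup_j V(H_j)\bigr) \cup T$, which gives $|S| \leq M + 2 c \log(n) \alpha n$. By the choice of $T$, no edge of $G$ joins $B_i \setminus T$ to $B_j \setminus T$ for distinct $i,j$, so every connected component of $G[V \setminus S]$ lives inside some single $B_i \setminus T$ and therefore has fewer than $3m$ vertices. Taking these connected components as the family $\{A_i\}_i$ yields the non-adjacent subgraphs required by the second branch.

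The only place I expect mild friction is ensuring that the two branches of the dichotomy are genuinely exhaustive, i.e., that the iterative extraction always terminates in a state that either satisfies the first conclusion or cleanly fulfils the hypothesis of Theorem~\ref{thm:comb-partitioning}. Both properties are baked into the stopping rule by design, so the real work is bookkeeping on the sizes of $\bigcup_j V(H_j)$ and $T$ rather than any substantive graph-theoretic difficulty.
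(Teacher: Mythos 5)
Your proposal is correct and follows essentially the same route as the paper: greedily extract disjoint $\alpha$-expanding subgraphs of size at least $m$, and if their total size falls short of $M$, apply Theorem~\ref{thm:comb-partitioning} to the residual graph and convert the inter-part edge bound into a vertex separator by absorbing both endpoints of each cut edge (which is where the factor of $2$ in the bound on $|S|$ comes from). The paper compresses all of this into a single sentence; your write-up simply supplies the bookkeeping it leaves implicit.
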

\begin{proof}
    It suffices to apply Theorem \ref{thm:comb-partitioning} to the subgraph $G \setminus (\bigcup_j H_j)$.
\end{proof}

The final piece we need is the observation from Ref.~\cite{bravyi2011subsystem} that disjoint subsets of qubits of size strictly less than $d$ are jointly correctable, and $k \leq n - |A|$ whenever $A$ is a set of correctable qubits.

\begin{corollary}
    \label{cor:bound-subsystem}
    Let $[[n,k,d]]$ be a subsystem code with connectivity graph $G$. Then there are subgraphs $\{H_j\}_j$ such that 
    \begin{enumerate}
        \item $|H_j| \geq d/3 $
        \item $\sum_j |H_j| \geq \frac{1}{2}k$
        \item every $H_i$ is $\frac{k}{4 c \cdot \log(n) \cdot n}$-expander
    \end{enumerate}
\end{corollary}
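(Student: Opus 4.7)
The plan is to proceed by contradiction, using the nonexistence of the claimed family $\{H_j\}_j$ to construct a small separator whose removal yields many jointly correctable qubits, violating the correctability bound $k \leq n - |A|$ from Ref.~\cite{bravyi2011subsystem}.

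First, I would suppose for contradiction that no family of disjoint subgraphs $\{H_j\}_j$ simultaneously satisfies $|H_j| \geq d/3$, $\sum_j |H_j| \geq k/2$, and each $H_j$ being an $\alpha$-expander with $\alpha = k/(4c \log(n) n)$. Under this hypothesis, Corollary~\ref{cor:comb-partitioning} applies with $m = d/3$, $M = k/2$, and this value of $\alpha$, producing a vertex set $S \subseteq V$ with
\[
|S| \leq \tfrac{k}{2} + 2c\log(n)\cdot \alpha \cdot n \leq k,
\]
such that $G[V \setminus S]$ is the disjoint union of pairwise non-adjacent components $G[A_i]$ with $|A_i| < 3m = d$.

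Second, I would argue that $A := \bigsqcup_i A_i$ is jointly correctable as a subset of qubits for the subsystem code. Each $A_i$ is individually correctable because $|A_i| < d$; moreover, since $G$ is the connectivity graph and the components $A_i$ are pairwise non-adjacent in $G$, no gauge or stabilizer check has support intersecting more than one $A_i$. Hence corrections on each region can be composed into a correction on $A$, exactly as in the argument recalled before the corollary statement.

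Third, I would invoke $k \leq n - |A|$ to get $|S| = n - |A| \geq k$, which together with the upper bound $|S| \leq k$ forces equality throughout. To upgrade this to a genuine contradiction, I would exploit the strict inequality $|A_i| < 3m$ (or equivalently shave $\alpha$ by a small constant factor to $\alpha' = k/(5c\log(n)n)$, which is absorbed into the claimed expansion up to rescaling the universal constant). This gives $|S| < k$ strictly and closes the argument.

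The main obstacle is the interplay of constants: with the stated $\alpha$ the two bounds on $|S|$ meet exactly, so the contradiction requires careful use of the strictness of the Corollary~\ref{cor:comb-partitioning} bound on component sizes, or a minor tightening of $\alpha$ that is immaterial at the level of the stated $\text{polylog}(n)$ scaling. A secondary verification is that ``non-adjacency of the $A_i$ in $G$'' genuinely implies ``no check straddles two components,'' which is exactly the locality hypothesis encoded in the notion of a connectivity graph for the subsystem code.
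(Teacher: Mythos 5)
Your argument is correct and follows essentially the same route as the paper's own proof: invoke Corollary~\ref{cor:comb-partitioning} with $m = d/3$ and $M = k/2$, observe that the resulting non-adjacent components of size $< d$ are jointly correctable so that $k \leq n - |A| = |S|$, and compare this with the upper bound $|S| \leq M + 2c\log(n)\alpha n$. The only caveat is that of your two proposed fixes for the boundary equality $|S|=k$, only the second (shaving $\alpha$ by a constant factor) actually produces a strict inequality --- the strictness of $|A_i| < 3m$ enters only in the correctability step, not in the bound on $|S|$ --- but this is a constant-factor subtlety that the paper's own one-line proof glosses over entirely.
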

\begin{proof}
    Note that by plugging $m = d/3$ in to Corollary \ref{cor:comb-partitioning} we find a set $\bigcup_i A_i$ that is correctable. 
    Using the result from Ref.~\cite{bravyi2011subsystem}, we can conclude that $k \leq |S|$. Finally, we can pick $M = \frac{1}{2}k$ to obtain the lower bound on the expansion.
\end{proof}

Optimizing for the distance, Corollary \ref{cor:bound-subsystem} states that an $\alpha$-expander connectivity graph allows for at best a $[[n, \tilde{\Theta}(\alpha n), \Theta(n)]]$ code.

\end{document}